\newcommand{\annote}[3]{{%
		\colorbox{#3}{\bfseries\sffamily\footnotesize\textcolor{white}{#2}}%
		\color{#3}%
		\ifthenelse{\equal{#1}{}}{}{%
      $\blacktriangleright$\textit{#1}$\blacktriangleleft$}%
    }%
}
\newcommand{\todo}[1]{\annote{#1}{Todo}{red}}
\newcommand{\commentFT}[1]{\annote{#1}{FT}{magenta}}
\newcommand{\commentOR}[1]{\annote{#1}{OR}{OliveGreen}}
\renewcommand{\annote}[3]{}
\newcommand{\mlone}{ml1M\xspace}
\newcommand{\MLone}{MovieLens1M\xspace}
\newcommand{\mlten}{ml10M\xspace}
\newcommand{\MLten}{MovieLens10M\xspace}
\newcommand{\mltwenty}{ml20M\xspace}
\newcommand{\MLtwenty}{MovieLens20M\xspace}
\newcommand{\am}{AM\xspace}
\newcommand{\AM}{AmazonMovies\xspace}
\newcommand{\dblp}{DBLP\xspace}
\newcommand{\DBLP}{DBLP\xspace}
\newcommand{\gw}{GW\xspace}
\newcommand{\GW}{Gowalla\xspace}
\newcommand{\BruteForce}{Brute Force\xspace}
\newcommand{\Hyrec}{Hyrec\xspace}
\newcommand{\NNDescent}{NNDescent\xspace}
\newcommand{\LSH}{LSH\xspace}
\newcommand{\GF}{GoldFinger\xspace}
\newcommand{\HyperLSH}{Cluster-and-Conquer\xspace}
\newcommand{\hyperLSH}{Cluster-and-Conquer\xspace}
\newcommand{\HYPERLSH}{Cluster-and-Conquer\xspace}
\newcommand{\HLSH}{$C^2$\xspace}
\newcommand{\MinHash}{MinHash\xspace}
\newcommand{\profileSim}{f_{\mathit{sim}}}
\newcommand{\aknn}[1]{\widehat{\mathit{knn}}(#1)}
\newcommand{\agknn}{\widehat{G}_{\operatorname{KNN}}}
\newcommand{\gknn}{G_{\operatorname{KNN}}}
\newcommand{\itemsOfU}[1]{\mathit{P}_{#1}}
\newcommand{\qualityFunc}{\mathit{avg\_sim}}
\newcommand{\qualityRatioFunc}{\mathit{quality}}
\newcommand{\dataset}[1]{\multirow{4}{*}{\begin{sideways}#1\end{sideways}}}
\newcommand{\vertbrace}[4]{\raisebox{#4}{\multirow{#1}{*}{\begin{sideways}$\begin{array}{c}#2\\\overbrace{\hspace{#3}}\end{array}$\end{sideways}}}}
\newcommand{\datasetsep}{\vspace{1em}}
\newcommand{\myHrule}{\cline{3-5}}
\renewcommand{\datasetsep}{\myHrule\\[-0.5em] \myHrule}
\newcommand{\mycolsep}{\hspace{0.5em}}
\newcommand{\knnFunc}[1]{knn(#1)}
\newcommand{\dictionariesOrSets}{sets\xspace}
\definecolor{Gray}{gray}{0.88}
\newcommand{\FMH}{Fast\-Random\-Hash\xspace}
\newcommand{\nbHash}{t\xspace}
\newcommand{\nbHashTry}{N\xspace}
\newcommand{\nbBuckets}{b\xspace}
\newcommand*{\shifttext}[2]{%
  \settowidth{\@tempdima}{#2}%
  \makebox[\@tempdima]{\hspace*{#1}#2}%
}
\newcommand{\Pin}{P_{\cap}} 
\newcommand{\prob}[1]{\mathbb{P} [#1]}
\newcommand{\high}{\cellcolor{Gray!75}}
\newcommand{\nbColl}{\kappa}
\newcommand{\Cu}{\ell} 
\newcommand{\probNaked}{\mathbb{P}}
\newcommand{\expect}[1]{\mathbb{E} \left[#1\right]}
\definecolor{dkgreen}{rgb}{0,0.6,0}
\definecolor{gray}{rgb}{0.5,0.5,0.5}
\definecolor{mauve}{rgb}{0.58,0,0.82}
\tiny\color{gray},
\newtheorem{theorem}{Theorem}
\begin{document}
\title{\HyperLSH:\\When Randomness Meets Graph Locality} 

\author{%
\IEEEauthorblockN{George Giakkoupis}
\IEEEauthorblockA{\textit{Univ Rennes, Inria, CNRS, IRISA} \\
Rennes, France \\
george.giakkoupis@inria.fr}
\and
\IEEEauthorblockN{Anne-Marie Kermarrec}
\IEEEauthorblockA{\textit{EPFL} \\
Lausanne, Switzerland \\
anne-marie.kermarrec@epfl.ch}
\and
\IEEEauthorblockN{Olivier Ruas}
\IEEEauthorblockA{\textit{Inria, Univ. Lille} \\
  Lille, France \\
olivier.ruas@inria.fr}
\and
\IEEEauthorblockN{Fran\c{c}ois Ta\"iani\textsuperscript{\textdagger}}
\IEEEauthorblockA{\textit{Univ Rennes, Inria, CNRS, IRISA} \\
Rennes, France \\
francois.taiani@irisa.fr}
}


\maketitle

\begin{abstract}
  K-Nearest-Neighbors (KNN) graphs are central to many emblematic data mining and machine-learning applications.
  Some of the most efficient KNN graph algorithms are incremental and local: they start from a random graph, which they incrementally improve by traversing neigh\-bors-of-neighbors links.
  Paradoxically, this  random start is also one of the key weaknesses of these algorithms:
  nodes are initially connected to dissimilar neighbors\commentOR{"unsimilar" is not proper english: "un-similar"? "highly different nodes"?}\commentFT{using dissimilar}, that lie far away according to the similarity metric. As a result, incremental algorithms must first laboriously explore spurious potential neighbors before they can identify similar nodes, and start converging.\commentFT{looping back to incremental algorithms}
  In this paper, we remove this drawback with \HyperLSH (\HLSH for short). 
  \todo{Highlight what is unique about \HLSH. Sounds like tread of the mill.}
   \HyperLSH boosts the starting configuration of greedy algorithms thanks to a novel lightweight clustering mechanism, dubbed \FMH. \FMH leverages randomness and recursion to pre-cluster similar nodes at a very low cost.
  Our extensive evaluation on real datasets shows that \HyperLSH significantly outperforms existing approaches, including LSH, yielding speed-ups of up to $\times 4.42$ while incurring only a negligible loss in terms of KNN quality.
  \commentOR{Percent instead of Speed-ups?}
\end{abstract}



\begin{IEEEkeywords}
  KNN graph, Big Data
\end{IEEEkeywords}

\IEEEpeerreviewmaketitle 

{\renewcommand{\thefootnote}{\textdagger}\footnotetext{Authors are listed in alphabetical order.}}

\section{Introduction}
\label{chap:HyperLSH:intro}

$k$-Nearest-Neighbors (KNN) graphs\footnote{Note that the problem of computing a complete KNN graph (which we address in this paper) is related but different from that of answering a sequence of KNN queries.} play a fundamental role in many
emblematic data-mining and machine-learning applications, including classification~\cite{gorai2011employing,DBLP:journals/corr/NodarakisSTTP14}, recommender systems~\cite{Hyrec,Levandoski:2012:LLR:2310257.2310356,linden2003amazon,sarwar2001item,campos2010simple}, dimensionality reduction~\cite{chen2009fast} and graph signal processing~\cite{Tremblay:SignalProcessing2016}.
A KNN graph connects each node of a dataset to its $k$ closest counterparts (its {\it neighbors}), according to some application-dependent 
similarity metric
.
In many applications, this similarity is computed from a second set of entities, termed \emph{items}, associated with each node. (For instance, if nodes are users, items might represent the websites they have visited
.)\commentFT{Introducing the notion of items, as \HLSH is designed for item-based datasets.}
Despite being one of the simplest models in data analysis, computing an exact KNN graph remains extremely costly, incurring,
for instance, a quadratic number of similarity computations under a brute-force strategy.

Many applications, however, only require a reasonable approximation of a KNN graph, as long as this approximation can be produced rapidly. This is, for instance, true of online news recommenders, in which the use of fresh data is of utmost importance, or of machine learning techniques that use the KNN graph as a first preliminary step~\cite{chen2009fast,Tremblay:SignalProcessing2016}.
Existing approximate KNN graph algorithms essentially fall into two families, that each uses different strategies to drastically reduce the number of similarities they compute: \emph{greedy incremental solutions}~\cite{KIFF,Hyrec,NNDescent,bratic2018nn}, and \emph{partition-based techniques} (that include the popular LSH algorithm~\cite{LSH1,LSH2}).\commentFT{Mentioning LSH as early as possible, so that readers do not get the impression we are avoiding the comparison.}

Greedy incremental solutions are currently among the best performing KNN graph construction algorithms, and exploit a local incremental search: they start from an initial random $k$-degree graph, which they greedily improve by traversing neighbors-of-neighbors links.
Although they generally perform best\commentFT{toning down the `best' argument, to deflect criticism that these solutions are old etc.}, greedy approaches are critically hampered by their initial random start:
similar nodes that lie close to one another in the \emph{final KNN graph} are connected in the \emph{initial random graph} to dissimilar nodes. 
In other words, these greedy approaches present a poor \emph{initial graph locality}: 
nodes that are similar tend to be separated by long paths in the initial graph.\commentFT{Redundant with what precedes.}
As a result, greedy algorithms must initially compute many similarities between 
unrelated nodes, which adds a costly overhead for little to no benefit.

Partition-based algorithms~\cite{LSH1,LSH2,chen2009fast} avoid this problem by clustering nodes\commentFT{rather than users} before solving locally the problem, under a classic divide-and-conquer strategy.
Unfortunately, a clustering that is both good and efficient is difficult to achieve. 
\LSH~\cite{LSH1,LSH2} for instance relies on 
hash functions that tend to fragment sparse datasets
with large dimensions (from $\sim 10^3$ to $\sim 10^5$ in our experiments), which are typical of many on-line applications manipulating users and items.\commentOR{dimension instead of item sets?}\commentFT{We haven't mentioned items yet, so this comes as a surprise. No sure how to solve this.} Traditional clustering techniques such as k-means~\cite{macqueen1967some} are similarly ill-fitted, as they require many similarity computations, which are precisely what we seek to avoid.

In this paper, we reconcile both perspectives with {\bf \hyperLSH} (\HLSH for short), a KNN graph algorithm for item-based datasets\commentFT{stressing that we need items, so that it does not come as a surprise in sec 2} that boosts its initial graph locality by exploiting a novel, fast and accurate clustering scheme, dubbed {\bf\FMH{}}. \FMH{} 
does not require any similarity computations (similarly to \LSH) while avoiding fragmentation 
(similarly to k-means).
\FMH{} leverages {\bf fast random hash functions} and employs a {\bf new recursive splitting mechanism} to balance clusters, for optimal parallelism, and minimal synchronization between the involved threads in a parallel implementation.
\commentFT{Trying to inject threads early, as they come up later on.}

We present an extensive evaluation of \HyperLSH, performed on six real datasets, which confirms that our proposal significantly outperforms existing approaches, including LSH, yielding speed-ups ranging from $\times1.12$ (against \LSH on \MLone) to $\times 4.42$ (against \Hyrec, a state of the art greedy KNN algorithm~\cite{Hyrec}, on \AM) while incurring only a negligible loss in terms of KNN quality.


In the following, we first present the context of our work and our approach (Sec.~\ref{chap:HyperLSH:approach}). We then formally analyze the novel clustering scheme at the core of our proposal (Sec.~\ref{subsec:HyperLSH:property}). We present our evaluation procedure (Sec.~\ref{chap:HyperLSH:setup}) and our experimental results (Sec.~\ref{chap:HyperLSH:setup}); before reporting on factors impacting our solution (Sec.~\ref{sec:HyperLSH:sensitivity}). We finally discuss related work (Sec.~\ref{chap:HyperLSH:related}) and conclude (Sec.~\ref{chap:HyperLSH:conclusions}).

\section{\hyperLSH}
\label{chap:HyperLSH:approach}

For ease of exposition, we consider in the following 
that nodes are {\it users} associated with {\it items} (e.g. web pages, movies, locations).

\subsection{Notations and problem definition}
\label{sec:problem-definition}


We note $U = \{u_1,...,u_n\}$ the set of all users, and $I = \{i_1,...,i_m\}$ the set of all items. The subset of items associated with user $u$ (a.k.a. her {\it profile}) is noted $\itemsOfU{u} \subseteq I$. $\itemsOfU{u}$ 
is generally much smaller than $I$ (the universe of all items).

Our objective is to approximate a k-nearest-neighbor (KNN) graph over $U$ (noted $G_{\operatorname{KNN}}$) according to some similarity function $sim \in \mathbb{R}^{U \times U}$ computed over user profiles:
$$sim(u,v) =  \profileSim(\itemsOfU{u},\itemsOfU{v})$$

\noindent where $\profileSim$ may be any similarity function over \dictionariesOrSets that is positively correlated with the number of common items between the two \dictionariesOrSets, and negatively correlated with the total number of items present in both \dictionariesOrSets. These requirements cover some of the most commonly used similarity functions in KNN graph construction applications, such as cosine or the Jaccard similarity. We use the Jaccard similarity in the rest of the paper~\cite{Rijsbergen:1979:IR:539927}: $$sim(u,v)=J(P_u,P_v) = \frac{|P_u\cap P_v|}{|P_u \cup P_v|}$$

A KNN graph $\gknn$ connects each user $u\in U$ with a set $\knnFunc{u}$ (the `KNN' of $u$ for short) which contains the k most similar users to $u$, with respect to the similarity function $sim(u,-)$.

Computing an exact KNN graph is particularly expensive: a brute-force exhaustive search 
requires $O(|U|^2)$ similarity computations. 
Many scalable approaches, therefore, seek to construct {\it an approximate KNN graph}  $\agknn$, i.e., to find for each user $u$ a neighborhood $\aknn{u}$ that is as close as possible to an exact KNN neighborhood~\cite{Hyrec,NNDescent,bratic2018nn}.
The meaning of `close' depends on the context, but in most applications, a good approximate neighborhood $\aknn{u}$ is one whose aggregate similarity (its {\it quality}) comes close to that of an exact KNN set $\knnFunc{u}$.



We capture how well the average similarity of an approximated graph $\agknn$ compares against that of an exact KNN graph $\gknn$ with the {\it average similarity} of $\agknn$,
defined as the average similarity observed on the graph's edges: 
%
\begin{equation}
  \qualityFunc(\agknn) =  \frac{1}{k \times n}\sum \limits_{\substack{(u,v)\in U^2:\\\; v \in  \aknn{u}}} \profileSim(\itemsOfU{u},\itemsOfU{v}),
\end{equation}
%
We then define the {\it quality} of $\agknn$ as the ratio between its average similarity and the average similarity of an exact KNN graph $\gknn$:
\begin{equation}
  \qualityRatioFunc(\agknn) =  \frac{\qualityFunc(\agknn)}{\qualityFunc(\gknn)}.
\end{equation}
%
A quality close to 1 indicates that the approximate KNN graph 
can replace the exact one with little impact in most applications. 

With the above notations, we can summarize our problem as follows:  for a given dataset $(U,I, (\itemsOfU{u})_{u\in U})$ and item-based similarity $\profileSim$, we wish to compute an approximate 
$\agknn$ in the shortest time with the highest overall quality.

\subsection{Intuition}

{\bf Poor initial graph locality is a significant problem.}
\commentFT{Trying to tighten up the argument, so that the intuition comes faster.}
Some of today's best performing approaches for KNN graphs use a greedy strategy~\cite{Hyrec,NNDescent,bratic2018nn}: starting from a random initial $k$-degree graph, those algorithms incrementally seek to improve each user's neighborhood by exploring neighbors-of-neighbors links. Unfortunately, this random start tends to separate similar nodes by long paths in the initial graph. This disconnect between \emph{similarity} (which captures `closeness' from the application's point of view), and \emph{initial graph-distance} (in terms of shortest path between two nodes), means greedy approaches initially suffer from a poor \emph{graph locality}. This phenomenon is particularly marked in the first few iterations of greedy algorithms, in which neighbors-of-neighbors tend to be random, 
leading to spurious similarity computations, and a slow convergence~\cite{KIFF}.

Beyond convergence, a poor initial graph also hampers the concrete execution of greedy KNN graph approaches,
for practical reasons linked to the memory management of modern computers. Because their initial graph is random, greedy approaches must iterate through users in an arbitrary order, usually employing parallelization (a.k.a. multithreading) on a multicore machine (as is typical for KNN graph computations). This arbitrary order prevents greedy approaches from fully exploiting the cache hierarchies of modern hardware: each thread accesses unrelated users, with unrelated profiles and neighborhoods, and cannot benefit from memory locality (the tendency of programs to access the same memory areas over short time windows). This low memory locality lowers the hit rates of processor caches and reduces performance.

\medskip

\noindent{\bf \HyperLSH substantially improves the graph locality of its initial configuration, 
using an approximate, yet cheap and fast procedure}. 
This basic principle 
is illustrated in Figure~\ref{fig:HyperLSH:locality}.
Whereas standard greedy approaches (left) initially connect each user (in blue) to $k$ random neighbors (in red), \HyperLSH partitions users into small sub-datasets (also called clusters in the following), in which similar neighbors can be selected (Fig.~\ref{fig:HyperLSH:localityHyperLSH}), 
leading to much faster computation times.
We then assign each cluster to a dedicated thread, that computes its (partial) KNN graph 
in isolation, improving parallelism.
Merging all resulting partial graphs produces the final global KNN graph. 

\begin{figure}[tb]
  \captionsetup[subfloat]{farskip=0pt}  \center
  \subfloat[Traditional greedy approaches]{
    \includegraphics[width=0.43\linewidth]{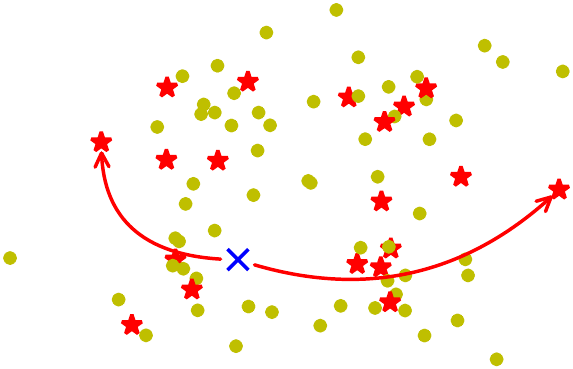}
    \label{fig:HyperLSH:localityHyrec}}
  \hfil
  \subfloat[\hyperLSH]{
    \includegraphics[width=0.43\linewidth]{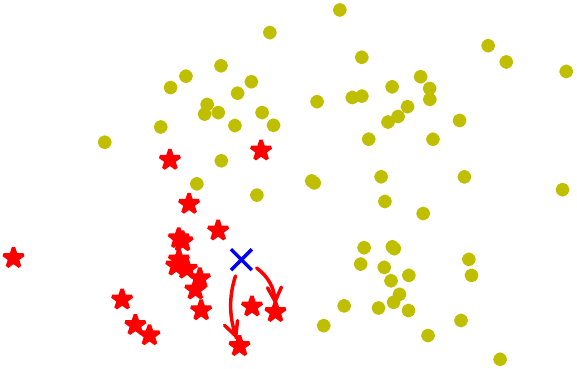}
  \label{fig:HyperLSH:localityHyperLSH}}
  \caption{Graph locality on a toy dataset (shown in 2D). 
    A given user (in blue) starts with unrelated neighbors (in red) with traditional approaches (a). \hyperLSH ensures a much higher initial graph locality (b).
}
  \label{fig:HyperLSH:locality}
\end{figure}

\medskip

\noindent {\bf Clustering is key to both performance and quality.} 
A core challenge when applying the above strategy consists in \textit{(i)} grouping together similar users to produce good sub-datasets, while \textit{(ii)} doing so on a tight computational budget.
This is hard, as most clustering techniques for item-based datasets either tend to fragment users in a large number of buckets (e.g. \LSH~\cite{LSH1,LSH2}),
or incur many similarity computations (e.g. k-means~\cite{macqueen1967some}).\commentFT{Ideally we would like to refer back to partition-based techniques, as discussed in the intro, e.g. something along the lines, `Taking inspiration from partition-based techniques, ...', or `The idea of partitioning users to produce KNN graphs faster~\cite{LSH1,LSH2,chen2009fast} is not new. \hyperLSH (what is different)', or `\hyperLSH applies partitioning (at the core of other KNN approaches such as LSH) to greedy approaches for KNN graph construction. Its core novelty lies in its partitioning techniques, that is both extremely fast and balanced.' At the same time, we don't want to draw too much attention to LSH. Not sure how to do this.}

\medskip

\noindent{\bf Fast but approximate clustering does the job.}
To overcome this challenge, we introduce \emph{\FMH{}}, a novel, fast-to-compute hashing scheme that we use at the core of \hyperLSH to group users into highly-local and balanced sub-datasets.
\FMH does not require any similarity computation between users, yet remains extremely lightweight to compute. \FMH exploits two ideas: \emph{(i)} \emph{redundant random hashing} on items for speed and graph locality, 
and \emph{(ii)} \emph{recursive splitting} for load balancing and parallelism.%
\commentFT{Move the following to later, when we present details.}
\commentFT{The use of random hash functions causes however collisions: users with no item in common may be hashed into the same cluster, resulting in superfluous similarities.
This is the price to pay for having a fast-to-compute, but approximate, hashing scheme.
}

\medskip

\noindent{\bf Introducing redundancy to compensate for approximate clustering.}
Because we use random hash functions, similar nodes may still end up within different clusters, preventing them from becoming neighbors in the final global KNN graph, and resulting in a poor KNN approximation. 
To mitigate this risk, we use multiple hash functions, thus increasing the probability that similar neighbors end up within the same cluster at least once.

\medskip

\noindent{\bf Recursively splitting large clusters to increase parallelism.}
By default, 
some clusters might become quite large, and slow down the whole computation.
To avoid this problem, we introduce a recursive load balancing mechanism that exploits the same \FMH scheme and repeatedly splits clusters that are larger than a given threshold $\nbHashTry$. 
\commentFT{It might be good to announce what comes afterward.}

\subsection{\HyperLSH: Overview}
\commentFT{I've moved this overview subsection up, to provide a plan for the rest of the section, which is quite long with many subsections. We can move it back to the end if you feel this does not work.}
Building on the previous intuitions, \HyperLSH works in three steps, which we present in detail in the remaining of this section:
\begin{itemize}
\item \underline{\bf Step $\bf 1$}: \textbf{Clustering}.
  The dataset is clustered in $\nbHash \times b$ clusters, using \FMH functions, where $t$ is the number of hash functions, and $b$ the number of clusters per hash function. (We return later on to the effect of these two parameters on the algorithm's speed and quality.)
  Clusters whose size exceeds $\nbHashTry$ are recursively split.
\item \underline{\bf Step $\bf 2$}: \textbf{Scheduling and KNN graph computation}.
  The clusters are processed in parallel to produce a partial KNN for each of them. The parallel computation uses a greedy scheduling heuristic to balance work between computing cores.
\item \underline{\bf Step $\bf 3$}: \textbf{Merging}.
  The resulting partial KNN graphs are merged.
\end{itemize}
The resulting KNN graph is returned as the KNN graph of the whole dataset.

\subsection{Step 1: Clustering with \FMH}
\label{subsec:HyperLSH:hash}
The \FMH scheme first projects each item $i\in I$ onto a hash value $h(i)$ using a generative hash function $h : I \rightarrow \llbracket 1,\nbBuckets \rrbracket$. 
The hash $H(u)$ of a user $u$ is then taken as the minimum hash value among $u$'s items:
\begin{equation}
  H(u)=\min_{i\in P_u} h(i).
\end{equation}

\newcommand{\overBraceAdjust}{-0.3em} 
\newcommand{\clusterIdAdjust}{-9pt}
\newcommand{\interHashAdjust}{-0.75em}
\newlength{\txtwd}

\commentOR{Example removed}
\commentFT{If we need space to include the theoretical analysis, I think we can remove this example.}
For example, consider the following hash function $h$ over an item set $I=\{i_1,i_2,i_3,i_4,i_5\}$, with $\nbBuckets=3$. 
\begin{center}
  \begin{tabular}{l@{\hspace{0.2em}}c}
  \vertbrace{5}{\rotatebox[origin=c]{-90}{$h$}}{5em}{0.em} &
    $i_1\rightarrow 2$\\
  & $i_2\rightarrow 3$\\
  & $i_3\rightarrow 2$\\
  & $i_4\rightarrow 1$\\
  & $i_5\rightarrow 3$
  \end{tabular}
\end{center}
\noindent If we apply \FMH to two users $u$ and $v$ whose profiles are given as
\begin{center}
  $\itemsOfU{u} = \{i_1,i_2,i_3\},$\\
  $\itemsOfU{v} = \{i_3,i_4,i_5\},$
\end{center}
we obtain, using the associated \FMH $H$,
\begin{center}
  $H(u)=min\{h(i_1),h(i_2),h(i_3)\}=min\{2,3,2\}=2,$\\
  $H(v)=min\{h(i_3),h(i_4),h(i_5)\}=min\{2,1,3\}=1,$
\end{center}
\noindent yielding the clustering configuration shown in Figure~\ref{fig:HyperLSH:example_cluster}. 


\begin{figure}[h]
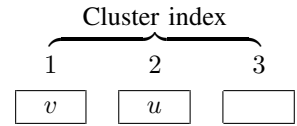

  \center
\begin{tabular}{|c| c |c| c |c|}
\multicolumn{5}{c}{$\overbrace{\hspace{8em}}^{\text{\normalsize{Cluster index}}}$}\\[\overBraceAdjust]
  \multicolumn{1}{c}{$1$} & \multicolumn{1}{c}{} & \multicolumn{1}{c}{$2$} & \multicolumn{1}{c}{} & \multicolumn{1}{c}{$3$}\\[\clusterIdAdjust]
\multicolumn{1}{c}{\hspace*{15pt}} & \multicolumn{1}{c}{} & \multicolumn{1}{c}{\hspace*{15pt}} & \multicolumn{1}{c}{} & \multicolumn{1}{c}{\hspace*{15pt}}\\
\hhline{*{2}{|-|~}-}
$v$ & & $u$ & & \\
  \hhline{*{2}{|-|~}-}

\end{tabular}
\caption{Clustering of $u$ and $v$ with $\nbBuckets=3$ clusters.}
\label{fig:HyperLSH:example_cluster}
\end{figure}
\commentFT{Removing the intermediate title `Clustering: \FMH in action', as this makes otherwise Section~\ref{subsec:HyperLSH:hash} extremely short.}
\label{subsec:HyperLSH:clustering}
\noindent $H(u)$ determines $u$'s cluster under $h$, resulting in $b$ clusters for each generative function, what we have termed a clustering configuration.
We use $\nbHash$ distinct generative functions, to produce $\nbHash$ clustering configurations and a total of $t\times\nbBuckets$ clusters.%
\commentFT{Detail, I think not needed at this stage. `In practice, we change the seed of the hash function used to hash the items to produce different \FMH functions.'}
The use of multiple hash functions reduces the risk that two similar nodes are never hashed into the same cluster, 
an event whose probability 
decreases exponentially with the number of hash functions $\nbHash$.


  

As an example, consider the earlier example of Section~\ref{subsec:HyperLSH:hash}.
We still have $I=\{i_1,i_2,i_3,i_4,i_5\}$ and $\nbBuckets=3$ and we are still interested in the two users $u$ and $v$.
We rename the hash function $h$ by $h_1$ and $H$ by $H_1$.
We consider another hash function $h_2$ and its associated \FMH $H_2$:

\begin{center}
  \begin{tabular}{l@{\hspace{0.2em}}c}
  \vertbrace{5}{\rotatebox[origin=c]{-90}{$h_2$}}{5em}{0.em} &
                                   $i_1\rightarrow 1$\\
  & $i_2\rightarrow 3$\\
  & $i_3\rightarrow 3$\\
  & $i_4\rightarrow 2$\\
  & $i_5\rightarrow 1$\\
  \end{tabular}
  \medskip

  $H_2(u)=min\{h_2(i_1)=1,h_2(i_2)=3,h_2(i_3)=3\}=1$\\
  $H_2(v)=min\{h_2(i_3)=3,h_2(i_4)=2,h_2(i_5)=1\}=1$
  \medskip
  
\hspace{-2em}
\begin{tabular}{@{}p{2em}@{}|c| c |c| c |c|}
\multicolumn{1}{c}{} &\multicolumn{5}{c}{$\overbrace{\hspace{8em}}^{\text{\normalsize{Cluster index}}}$}\\[\overBraceAdjust]
\multicolumn{1}{c}{} &  \multicolumn{1}{c}{$1$} & \multicolumn{1}{c}{} & \multicolumn{1}{c}{$2$} & \multicolumn{1}{c}{} & \multicolumn{1}{c}{$3$}\\[\clusterIdAdjust]
\multicolumn{2}{c}{\hspace*{15pt}} & \multicolumn{1}{c}{} & \multicolumn{1}{c}{\hspace*{15pt}} & \multicolumn{1}{c}{} & \multicolumn{1}{c}{\hspace*{15pt}}\\
  
  \hhline{~*{2}{|-|~}-}
$H_1$   &  $v$ & & $u$ & & \\
  \hhline{~*{2}{|-|~}-}
\multicolumn{6}{c}{}\\[\interHashAdjust]
  \hhline{~*{2}{|-|~}-}
$H_2$ & $u,v$ & & & & \\
  \hhline{~*{2}{|-|~}-}
\end{tabular}
\end{center}
\medskip


\noindent Because $H_1(u)=2 \neq H_1(v)=1$,\commentOR{We can remove the figure if more space is required} users $u$ and $v$ are mapped into different clusters in the first hashing configuration defined by $H_1$, but as $H_2(u)=H_2(v)=1$, they appear in the same cluster in the second configuration corresponding to $H_2$. The fact that $u$ and $v$ share one item ($i_3$), means they have a non-zero probability of appearing in the same cluster, even though we use random hash functions. (We characterize this property more precisely in Section~\ref{subsec:HyperLSH:property}.)


\newcommand{\cluster}{C}
\newcommand{\clusterIndex}{\eta_\cluster}
\newcommand{\clusterSize}{|\cluster|}
\newcommand{\clusterSet}{\mathcal{\cluster}_i}
\newcommand{\clusterSetProc}{\mathcal{\cluster}_p}

\SetKwComment{Comment}{$\rhd$}{}
\newcommand{\myCommandSty}[1]{\small\it #1}
\SetCommentSty{myCommandSty}

\begin{algorithm}[tb]
    \DontPrintSemicolon
    \For(\Comment*[f]{$t$ hash functions}){$i \in \llbracket 1,\nbHash \rrbracket$}{
    $B_i \leftarrow \text{new }Set \langle U \rangle [\nbBuckets]()$\;
    }
    \For(\Comment*[f]{clustering all users}){$u \in U$  {\bf and} $i \in  \llbracket 1,t \rrbracket$}{
    $B_i[H_i(u)] \leftarrow B_i[H_i(u)] \cup \{u\}$\Comment*[r]{\FMH}
    }
    \Return $(B_i)_{i\in \llbracket 1,\nbHash \rrbracket}$\;
  \caption{Step $1$ of \HLSH: the clustering}
  \label{algo:hyperlsh:clustering}
\end{algorithm}

Algorithm~\ref{algo:hyperlsh:clustering} shows the pseudocode of the clustering mechanism used by \HyperLSH. 
The variables $(\cluster_i)_{i\in \llbracket 1,\nbHash \rrbracket }$ are arrays of clusters.
There are $\nbHash$ of them, one for each hash function.
Each $\cluster_i$ is of size $\nbBuckets$, the number of cluster per hash function $H_i$.
Each $\cluster_i[j]$ is a cluster, represented by a set of users, so that $\bigcup\limits_{j=1}^{\nbBuckets} \cluster_i[j] = U$.


\medskip

\noindent {\bf Balancing large clusters through recursive splitting.
}
Using a minimum in the \FMH function, unfortunately, introduces a bias towards the clusters of low indices.
\commentFT{Removing the following example, as link to bias is not directly clear.}
\commentFT{`For instance, consider two users $u$ and $v$ such as $P_u=\{i_1,i_2,i_3\}$ and $P_v=\{i_1,i_2,i_3,i_4\}$ and a hash function $h$ such as $\forall j \ne 4,h(i_j)\geq 2$ and $h(i_4)=1$.
These users would end up in different clusters, although the presence of $i_4$ is the only difference in their profiles.
This is because $i_4$ is being hashed to a lower number than the other elements.'}
%
This bias is pervasive but particularly marked if highly popular items are hashed into one of the first clusters. In such a case, this cluster is likely to end up being much larger than the others in the same clustering configuration, many of which will be empty.

Highly unbalanced clusters tend to self-defeat the very purpose of the clustering step. This is because computing the partial KNN graph of a very large cluster can be almost as costly as that of the whole dataset. When this happens, the whole parallel computation is delayed, limiting the benefits of multi-threading.
To avoid this situation, we recursively split overlarge clusters, by using the fact that \FMH functions are extracted from hash values initially computed on items. 
More specifically, if a cluster $\cluster$ with index $\clusterIndex$ is larger than a threshold parameter $\nbHashTry$, we compute a second \FMH value $H\backslash \clusterIndex(u)$ for each of its users $u\in \cluster$, by ignoring $\cluster$'s index $\clusterIndex$ (i.e. the hash value that produced $\cluster$) when computing the hash values of $u$'s items:
$$ H\backslash \clusterIndex(u)=\min_{i\in P_u,\, h(i)>\clusterIndex} h(i) $$
\noindent where $h$ is the generative hash function that underlies the clustering configuration containing $\cluster$. The users of $\cluster$ are then distributed among new clusters, one for each new hash value, with two exceptions. Users who have a single item (for whom $H\backslash \clusterIndex$ is undefined) and users who are alone in their new cluster remain in $\cluster$.
The resulting clusters are again recursively split if their size exceeds $\nbHashTry$.
In summary, we split the users of a large cluster according to a second item, producing smaller and more refined clusters.

\begin{figure}[tb]
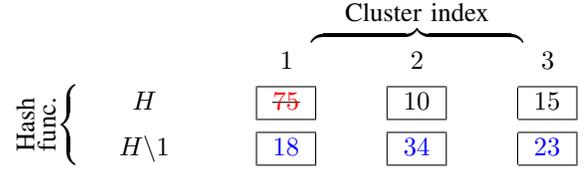

  \center
  \begin{tabular}{c c c |c| c |c| c |c|}
    
    & \multicolumn{1}{c}{} & \multicolumn{1}{c}{} & \multicolumn{5}{c}{$\overbrace{\hspace{8em}}^{\text{\normalsize{Cluster index}}}$}\\[\overBraceAdjust]
    &  & \multicolumn{1}{c}{}  & \multicolumn{1}{c}{$1$} & \multicolumn{1}{c}{} & \multicolumn{1}{c}{$2$} & \multicolumn{1}{c}{} & \multicolumn{1}{c}{$3$}\\[\clusterIdAdjust]
\multicolumn{1}{c}{} & \multicolumn{1}{c}{} & \multicolumn{1}{c}{\hspace*{15pt}} & \multicolumn{1}{c}{} & \multicolumn{1}{c}{\hspace*{15pt}} & \multicolumn{1}{c}{} & \multicolumn{1}{c}{\hspace*{15pt}}\\
    
\hhline{~~~*{2}{|-|~}-}
\vertbrace{3}{\substack{\text{\normalsize Hash}\\\text{\normalsize func.}}}{3em}{0.6em}
& $H$ & &  \sout{\color{red}$75$}
  & & 
      $10$
  & &
      $15$
  \\
\hhline{~~~*{2}{|-|~}-}
\multicolumn{1}{c}{} & \multicolumn{1}{c}{} & \multicolumn{1}{c}{} & \multicolumn{1}{c}{} & \multicolumn{1}{c}{}\\[\interHashAdjust]
\hhline{~~~*{2}{|-|~}-}
& $H\backslash 1$ & &
  {\color{blue}$18$}
& & 
  {\color{blue}$34$}
& & 
  {\color{blue}$23$}
\\
\hhline{~~~*{2}{|-|~}-}
\end{tabular}
\caption{Recursive splitting of clusters ($\nbBuckets=3$). In the initial clustering (first line), obtained with $H$, 
the first cluster $\cluster_H[1]$ exceeds the threshold of $\nbHashTry=40$ users, and is therefore split, by applying $H\backslash 1$, 
which only keeps item hashes higher than $1$. Users with no items being hashed to $2$ or $3$ remain in the first cluster. 
}
\label{figure:HyperLSH:2clusterings}
\end{figure}

Figure~\ref{figure:HyperLSH:2clusterings} illustrates this recursive splitting strategy on a cluster configuration of $|U|=100$ users, $\nbBuckets=3$ and $\nbHashTry=40$.
The first line represents the initial clustering, obtained with $H$, {\it before} the splitting.
The boxes represent the clusters and the number in each box the size of the cluster.
This initial clustering is highly unbalanced: the first cluster contains most of the users ($75$) while the others are nearly empty.
Since its size is higher than $\nbHashTry=40$, the first cluster is split into new clusters, shown on the second line.
The clusters of the second line are obtained using $H \backslash 1$, the \FMH $H$ keeping only hashes higher than $1$, on the $75$ users.
The first cluster is composed of users with no items being hashed to $2$ or $3$.
As none of the new clusters contains more than $40$ users, the splitting stops.
With the new clusters (shown in the second line), the new clustering configuration is more balanced, at the cost a few more clusters ($5$ instead of $3$).


The lower the threshold size $\nbHashTry$, the more balanced the final $\nbHash$ cluster configurations, thus the faster the computation.
Still, small clusters increase the chance of similar users never appearing in the same cluster, potentially hurting the quality of the final global KNN graph.
In practice, we choose $\nbHashTry=2000$.

\subsection{Comparison with LSH/MinHash}
\label{subsection:comparison_LSH}
Although \FMH can be understood as a randomized variant of the popular MinHash 
algorithm~\cite{broder1997resemblance,li_theory_2011}, often used with LSH~\cite{LSH1,LSH2}, key differences in its design lead to starkly different properties.
{\bf The use of random hash functions on a bounded discrete interval} $\llbracket 1,\nbBuckets \rrbracket$  considerably reduces the resulting number of buckets (we use $b=4096$ by default in our experiments), which is otherwise determined by the size  $|I|$ of the items universe with MinHash (which can be as high as 203,030 in the datasets we consider). Fewer buckets limit the dispersion of users in many small sub-datasets and increase the chances of finding good KNN neighbors while aligning better with the needs of a parallel implementation.
Our choice of a small hashing space, however, tends to cause collisions and to produce 
unbalanced 
clusters.

{\bf Recursive splitting} not only mitigates this second problem 
but also caps the maximum size of individual buckets, making the local KNN graph computations faster.
Recursive splitting is also tightly linked to
the small size of our hashing space.
While it could in principle also be applied to MinHash,
it would further fragment users into a still larger number of buckets, increasing the problem of dispersion mentioned earlier for the kind of sparse and large-dimensional datasets we consider.%
\commentFT{R}

\subsection{Step 2: Scheduling and local 
  computation}
\label{subsec:HyperLSH:step2}
\commentFT{Merging this subsection and the next, to avoid subsections that are too short.}
Recursively splitting clusters reduces gross discrepancies between cluster sizes, but the final clusters might still remain unbalanced.
To prevent an unbalanced workload among the threads of a parallel architecture, we apply some light-weight work scheduling.
The clusters are stored in a synchronized, decreasing priority queue, ordered according to their size.
We then use a basic thread pool to computes the KNN graph of each cluster in the queue, starting with the largest clusters and working down
the priority queue until it becomes empty.

{\bf The partial KNN graph of each  cluster $\cluster$ can be computed using any approach and does not need to be synchronized with any other computation}.
In our prototype, we use a hybrid solution
that switches between a brute force approach and a greedy KNN graph algorithm
depending on the number of users $\clusterSize$ in the cluster. 
(In practice we use \Hyrec~\cite{Hyrec}, see Sec.~\ref{setup:algos}, but any other KNN graph algorithm can be used.)
To determine a threshold value for the switch, we estimate the expected number of similarity computations for each approach and pick the least expensive.
The brute force approach computes $\frac{\clusterSize \times (\clusterSize-1)}{2}$ similarities, while \Hyrec's number of similarities is bounded by $\frac{\rho \times k^2 \times \clusterSize}{2}$, where $\rho$ is the number of iterations.
As a result, if $\clusterSize < \rho \times k^2$ we choose the brute force approach, \Hyrec otherwise.
Algorithm~\ref{algo:hyperlsh:computing} shows the pseudocode of the local KNN graph computation used by \HyperLSH.
In practice, we take $\rho=5$.
To further speed-up the computation, we use optimized versions of these algorithms that leverage a compact data structure~\cite{GFWWW} to provides a {\it fast-to-compute} estimation of Jaccard similarity values. 
In practice, this data structure, dubbed {\it \GF}~\cite{GFWWW}, summarizes each user's profile into a 64- to 8096-bit vector, which is then used to estimate Jaccard similarity values.

\begin{algorithm}[tb]
    \DontPrintSemicolon
    \lIf{$\clusterSize < \rho \times k^2$}{
    \Return $BruteForce(\cluster)$}
    \lElse{
    \Return $Hyrec(\cluster)$}
  \caption{Step 3 of \HLSH: local KNN on a cluster \cluster}
  \label{algo:hyperlsh:computing}
\end{algorithm}

\subsection{Step 3: Merging the KNN graphs}
\label{subsec:HyperLSH:step3}
Once the cluster phase is over, we merge the partial KNN graphs obtained for each cluster, one by one, into a unique KNN graph $knn$.
Merging (Algorithm~\ref{algo:hyperlsh:conquer}) is performed at the granularity of individual users. 
Each user appears in $\nbHash$ different clusters and is connected to up to $\nbHash \times k$ neighbors.
For each cluster $\cluster$, and each user $u$ of $\cluster$, the KNN neighborhood $knn'(u)$ of $u$ in $C$'s partial KNN graph is added to $u$'s final neighborhood $knn(u)$, while only keeping the $k$ best neighbors so far in $knn(u)$. While doing so we are careful to reuse similarity values, to avoid redundant computations.
The resulting KNN graph $knn$ is returned.

\begin{algorithm}[tb]
    \DontPrintSemicolon
    $knn \leftarrow \text{new }  knn()$\;
    \For(\Comment*[f]{for each hash function}){$i \in \llbracket 1,\nbHash \rrbracket $}{
    \For(\Comment*[f]{and for each cluster}){$\cluster \in \clusterSet$}{
      $knn' \leftarrow \cluster.knn()$\Comment*[r]{compute the sub KNN for $\cluster$}
    \For(\Comment*[f]{merge this KNN into $knn(u)$}){$u \in knn'$} {
      \For{$(v,s) \in knn'(u)$}{
        $knn(u).add(v,s)$\;
        \Comment*[r]{$knn(u)$ is a heap bounded to size $k$}
    }
    }
    }
    }
    \Return $knn$\;
  \caption{Step $4$ of \HLSH: KNN merging}
  \label{algo:hyperlsh:conquer}
\end{algorithm}

\section{Theoretical Properties}
\label{subsec:HyperLSH:property}

\newcommand{\JaccardShort}{J_{1,2}}
\newcommand{\JaccardLong}{J(P_1,P_2)}


The final neighbors of a user are selected from within the clusters in which this user appears. 
The quality of the final KNN graph is thus highly dependent on the ability of the hashing scheme to group similar users together.
In the following, we prove that the probability of two users being hashed into the same bucket, before any splitting occurs, is proportional to their Jaccard similarity up to some small error introduced by collisions.
\begin{theorem}\label{theo:probaSameFMH}
  The probability $\probNaked[H(u_1)=H(u_2)]$ that two users $u_1$, $u_2$ obtain the same \FMH value $H(\cdot)$ is lower bounded by the following inequality
  \begin{align}\label{eq:lowerBound}
  \JaccardShort - \frac{\nbColl}{\Cu} \leq \probNaked[H(u_1)=H(u_2)],
\end{align}
  where $\JaccardShort=\JaccardLong$ is the Jaccard similarity between $u_1$'s and $u_2$'s profiles, $P_1$ and $P_2$; $\Cu=|P_1\cup P_2|$ is the joint size of the two profiles; $\nbColl=\Cu-|h(P_1\cup P_2)|$ is the overall number of collisions occurring when projecting the two profiles onto $\llbracket 1,\nbBuckets \rrbracket$, and $h(\cdot)$ is the generative hash function underpinning $H(\cdot)$.
  If we further assume $\nbColl\leq\Cu/2$, then we have
  \begin{align}
  \probNaked[H(u_1)=H(u_2)]
  \leq \JaccardShort + 3\frac{\nbColl}{\Cu} + O\left(\frac{\nbColl}{\Cu}\right)^2.
\end{align}
\end{theorem}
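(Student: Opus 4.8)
The plan is to reduce the event $H(u_1)=H(u_2)$ to a purely combinatorial statement about which items of $S := P_1\cup P_2$ attain the global minimum hash value, and then to control the effect of collisions by a counting argument. Write $C := P_1\cap P_2$, $A := P_1\setminus P_2$, $B := P_2\setminus P_1$, so that $|S|=\ell$ and $|C|=\ell\,J_{1,2}$. First I would observe that $\min\bigl(H(u_1),H(u_2)\bigr)=\min_{i\in S}h(i)=:m$, and that $H(u_1)=H(u_2)$ holds if and only if the set of items attaining this minimum meets both $P_1$ and $P_2$. Indeed, if no $P_1$-item attains $m$ then some $P_2$-item does, whence $H(u_2)=m<H(u_1)$, and symmetrically; so the two hashes coincide exactly when the minimizing items \emph{straddle} the two profiles.

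Next I would decompose the randomness of $h$ into \emph{(i)} the partition of $S$ into the fibers (collision groups) $G_1,\dots,G_g$ of $h$, where $g=|h(S)|=\ell-\kappa$, and \emph{(ii)} the relative order of the $g$ distinct values these groups receive. Conditioned on the partition, the groups are assigned $g$ distinct uniform values, so by symmetry each group is equally likely to be the one attaining $m$. Hence, conditioned on any fixed collision structure, the event probability equals exactly $n_G/g$, where $n_G$ counts the \emph{straddling} groups, i.e.\ those that either contain a $C$-item or contain both an $A$-item and a $B$-item. Since this ratio depends on $h$ only through $\kappa$, the whole problem reduces to bounding $n_G$ in terms of $|C|$ and $\kappa$.

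For the lower bound I would note that every group meeting $C$ is straddling, so $n_G\ge g_C$, the number of groups containing a $C$-item. A collision-counting step gives $g_C\ge |C|-\kappa$: the $g_C$ groups touching $C$ hold all $|C|$ items of $C$, and collapsing $|C|$ items into $g_C$ groups already absorbs at least $|C|-g_C$ collisions, which cannot exceed the total $\kappa$. This yields $n_G/g\ge(|C|-\kappa)/(\ell-\kappa)\ge J_{1,2}-\kappa/\ell$, using $|C|=\ell\,J_{1,2}$ and monotonicity in the denominator (the degenerate case $|C|<\kappa$ being trivial, as the right-hand side is then negative). For the upper bound I would split straddling groups into those containing a $C$-item (at most $|C|$) and those straddling $A$ and $B$ without touching $C$; each group of the latter type has size $\ge 2$ and so charges at least one collision, giving at most $\kappa$ of them. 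Thus $n_G\le |C|+\kappa$ and $n_G/g\le(|C|+\kappa)/(\ell-\kappa)$. Setting $x=\kappa/\ell$ and applying the elementary inequality $1/(1-x)\le 1+2x$, valid precisely when $x\le 1/2$ (the hypothesis $\kappa\le\ell/2$), this expands to $J_{1,2}+(1+2J_{1,2})x+O(x^2)\le J_{1,2}+3\kappa/\ell+O\bigl((\kappa/\ell)^2\bigr)$, since $J_{1,2}\le 1$.

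The main obstacle, and the step that most needs care, is the symmetry reduction: justifying that after conditioning on the collision partition the minimizing group is uniform over the $g$ groups, and that the resulting per-partition bounds depend on $h$ only through $\kappa$, so they transfer to the quantity in the statement. The two counting inequalities $g_C\ge|C|-\kappa$ and $n_G\le|C|+\kappa$ are then short but must charge each collision only once. A secondary point is keeping the expansion honest: the hypothesis $\kappa\le\ell/2$ is exactly what bounds the remainder of $1/(1-x)$, and it is where the constant $3$ (rather than the tighter $2$ one might obtain) enters, through the crude step $1+2J_{1,2}\le 3$.
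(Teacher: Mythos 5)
Your proof is correct and takes essentially the same route as the paper's: your identity $\probNaked[H(u_1)=H(u_2)]=n_G/g$ is exactly the paper's $|h(P_1)\cap h(P_2)|/|h(P_1\cup P_2)|$ restated at the level of fibers, your counting bounds $|C|-\nbColl\leq n_G\leq |C|+\nbColl$ are its sandwich $|\Pin|-\nbColl\leq|h(P_1)\cap h(P_2)|\leq|\Pin|+\nbColl$, and the closing algebra (including $\frac{1}{1-x}\leq 1+2x$ for $x\leq\frac{1}{2}$ and $\JaccardShort\leq 1$) is identical. If anything, your write-up is slightly more careful than the paper's: you make the symmetry step rigorous by conditioning on the collision partition and invoking exchangeability of the injectively assigned values, and you explicitly handle the degenerate case $|C|<\nbColl$, where the paper's ``trivially'' in the lower-bound step silently relies on the bound being negative.
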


\begin{proof}
  $\probNaked[H(u_1)=H(u_2)]$ is proportional to $|h(P_1)\cap h(P_2)|$, where $h(X)$ is the image of the set $X$ by $h$.
  This is because $H(u_1)=H(u_2)$ iff the minimum element of $h(P_1\cup P_2)$ happens to belong also to $h(P_1)\cap h(P_2)$. As the generative hash functions that send $P_1\cup P_2$ onto $h(P_1\cup P_2)$ as $h$ does are each equally probable over the space of generative hash functions, the probability that the minimum element of $h(P_1\cup P_2)$ belongs to $h(P_1)\cap h(P_2)$ is given by the ratio of the two sets' sizes.
  More precisely:
\begin{equation}
  \probNaked[H(u_1)=H(u_2)]=\frac{|h(P_1)\cap h(P_2)|}{|h(P_1\cup P_2)|}.
\end{equation}
For brevity, let us note $\Pin=P_1\cap P_2$ the set of items that are present in both profiles. Compared to $\Pin$,
collisions may both increase $h(P_1)\cap h(P_2)$ (if they occur between elements of $P_1\Delta P_2$, the symmetric difference of the users' profiles, i.e. the items that only appear in one of the two profiles), or decrease it (if they occur between elements of $\Pin$), yielding
\begin{equation}
  |\Pin| - \nbColl \leq|h(P_1)\cap h(P_2)|\leq |\Pin| + \nbColl,
\end{equation}
where $\nbColl$ is the number of collisions caused by $h$ on $P_1\cup P_2$.  Since by definition $|h(P_1\cup P_2)|=\Cu-\nbColl$, we get
\begin{align}
  \frac{|\Pin|/\Cu - \nbColl/\Cu}{1-\nbColl/\Cu} \leq
  &\probNaked[H(u_1)=H(u_2)] \leq 
  \frac{|\Pin|/\Cu + \nbColl/\Cu}{1-\nbColl/\Cu}\\
  \frac{\JaccardShort - \nbColl/\Cu}{1-\nbColl/\Cu} \leq
  &\probNaked[H(u_1)=H(u_2)] \leq 
\frac{\JaccardShort + \nbColl/\Cu}{1-\nbColl/\Cu}.
\end{align}
Since $\nbColl<\Cu$, we trivially have $\frac{\JaccardShort - \nbColl/\Cu}{1-\nbColl/\Cu}\geq \JaccardShort - \frac{\nbColl}{\Cu}$, thus proving~(\ref{eq:lowerBound}). If we assume $\nbColl\leq\Cu/2$, we can use the fact that $\frac{1}{1-x}\leq 1+2x$ when $x\in[0,\frac{1}{2}]$, which yields, together with $\JaccardShort\leq1$
\begin{align*}
  \frac{\JaccardShort + \nbColl/\Cu}{1-\nbColl/\Cu} \leq&
  \left(\JaccardShort + \frac{\nbColl}{\Cu}\right)
  \left(1+2\frac{\nbColl}{\Cu}\right)
  \leq
  \JaccardShort + 3\frac{\nbColl}{\Cu} + O\left(\frac{\nbColl}{\Cu}\right)^2
\end{align*}
\end{proof}

Theorem~\ref{theo:probaSameFMH} states that \FMH will tend to allocate the same hash to similar users, modulo some noise introduced by collisions.
In other words, {\bf the more similar two users are, the more likely they are to be allocated in the same clusters}.
We now bound the effect of collisions with the following concentration bound.

\newcommand{\averageProb}{\frac{\Cu-1}{2\nbBuckets}}
\begin{theorem}\label{theo:probaCollisionDensity}
  \commentFT{FT16Jun20: Replacing the old concentration bound with that of George, as provides results that are as tight, and is much simpler}
  The collision density $\nbColl/\Cu$ is upper bounded by the value $(1+d)\averageProb$ with a probability bounded by the following formula
  \begin{align}
    \begin{array}{l}\displaystyle
    \label{eq:ConcentrationGeorge}
    \probNaked\left[\frac{\nbColl}{\Cu} < (1+d)\averageProb\right] \geq
    %
    1 - \left(\frac{e^d}{(1+d)^{(1+d)}}\right)^{\textstyle\frac{\Cu(\Cu-1)}{2b}},
    \end{array}
  \end{align}
  where $d>0$ is a real positive value, and the other variables are defined as in Theorem~\ref{theo:probaSameFMH}.\label{theo:boundOnKappaEll}
  
\end{theorem}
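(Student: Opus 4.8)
The plan is to read the projection of the $\Cu=|P_1\cup P_2|$ joint items onto $\llbracket 1,\nbBuckets\rrbracket$ as a balls-into-bins experiment: the $\Cu$ items are thrown independently and uniformly into $\nbBuckets$ buckets, and $\nbColl=\Cu-|h(P_1\cup P_2)|$ counts exactly the items that land in an already-occupied bucket. The target threshold $(1+d)\averageProb$ on the density $\nbColl/\Cu$ equals $(1+d)\mu/\Cu$ with $\mu=\Cu(\Cu-1)/(2\nbBuckets)$, the expected number of colliding pairs (and hence an upper bound on $\mathbb{E}[\nbColl]$). So the event $\nbColl/\Cu<(1+d)\averageProb$ is the same as $\nbColl<(1+d)\mu$, and the statement is nothing but a multiplicative upper-tail (Chernoff) bound for $\nbColl$: it suffices to show $\probNaked[\nbColl\geq(1+d)\mu]\leq\bigl(e^d/(1+d)^{1+d}\bigr)^{\mu}$ and pass to the complementary event.

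First I would expose the item hashes one at a time, $h(i_1),\dots,h(i_\Cu)$, and write $\nbColl=\sum_{k=2}^{\Cu} R_k$, where $R_k\in\{0,1\}$ indicates that item $i_k$ falls into a bucket already occupied by one of $i_1,\dots,i_{k-1}$. Letting $\mathcal{F}_{k-1}$ denote the information in the first $k-1$ hashes, the crucial point is that conditioned on $\mathcal{F}_{k-1}$ the variable $R_k$ is a \emph{single} Bernoulli with success probability $O_{k-1}/\nbBuckets$, where $O_{k-1}$ is the number of buckets occupied after $k-1$ items; since $O_{k-1}\leq k-1$, this conditional mean is at most $(k-1)/\nbBuckets$, uniformly over whatever configuration the earlier items produced.

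Next I would assemble the moment generating function by the tower property. For $s>0$, using $1+x\leq e^x$ and the conditional-mean bound, $\mathbb{E}[e^{sR_k}\mid\mathcal{F}_{k-1}]=1+\tfrac{O_{k-1}}{\nbBuckets}(e^s-1)\leq\exp\bigl(\tfrac{k-1}{\nbBuckets}(e^s-1)\bigr)$. Peeling the factors off from $k=\Cu$ down to $k=2$ (each $R_k$ with $k<\Cu$ is $\mathcal{F}_{\Cu-1}$-measurable, so the conditional expectations telescope cleanly) gives $\mathbb{E}[e^{s\nbColl}]\leq\exp\bigl(\tfrac{e^s-1}{\nbBuckets}\sum_{k=2}^{\Cu}(k-1)\bigr)=\exp\bigl((e^s-1)\mu\bigr)$, i.e.\ $\nbColl$ is MGF-dominated by a Poisson$(\mu)$ variable. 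Applying Markov's inequality to $e^{s\nbColl}$ and optimizing the exponent $(e^s-1)\mu-s(1+d)\mu$ at $s=\ln(1+d)$ then yields precisely $\probNaked[\nbColl\geq(1+d)\mu]\leq\bigl(e^d/(1+d)^{1+d}\bigr)^{\mu}$, and substituting $\mu=\Cu(\Cu-1)/(2\nbBuckets)$ and taking the complement produces the inequality of Theorem~\ref{theo:probaCollisionDensity}.

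The main obstacle is the dependence among collision events, so I expect the design choice in Step~2 to be the real content of the argument. The naive route through the $\binom{\Cu}{2}$ pairwise indicators $\mathbf{1}[h(i)=h(j)]$ fails: these are only pairwise independent, and conditioning on already-placed items makes their joint MGF blow up in the worst case (all previous items piled in one bucket, via the convexity of $e^{s\cdot}$). The sequential \emph{redundant-item} representation $\nbColl=\sum_k R_k$ sidesteps this entirely, because each increment is a lone Bernoulli whose conditional mean is bounded by $(k-1)/\nbBuckets$ no matter how the earlier items are spread, which is exactly what is needed to recover the clean Poisson/Chernoff exponent. The one modelling assumption to state explicitly is that $h$ sends items independently and uniformly into $\llbracket 1,\nbBuckets\rrbracket$, under which all of the above conditional computations are exact.
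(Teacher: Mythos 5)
Your proof is correct and follows essentially the same route as the paper: both expose the $\Cu$ hashes sequentially, bound the conditional probability that the $k$th item collides by $(k-1)/\nbBuckets$ regardless of how earlier items landed, and arrive at the Chernoff exponent $\bigl(e^{d}/(1+d)^{(1+d)}\bigr)^{\Cu(\Cu-1)/(2\nbBuckets)}$ before taking complements. The only difference is in execution: the paper couples the dependent collision indicators to independent Bernoullis $X'_k$ with $\probNaked[X'_k=1]=k/\nbBuckets$ and invokes the Mitzenmacher--Upfal bound as a black box, whereas you prove the tail bound inline via the tower property on the conditional moment generating function --- which in fact makes fully rigorous the stochastic-domination step that the paper only sketches.
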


\begin{proof}
  If we imagine we project each element of $P_1\cup P_2$ one after the other, we can define $\Cu$ 
random variables\commentFT{Removing independent, as I don't think the $X_k$ are independent.} \((X_k)_\Cu\) that are equal to \(1\) when the $k^\mathrm{th}$ element causes a collision, and to \(0\) otherwise. We have $\nbColl= \sum_{1 \leq k \leq \Cu} X_k$.
  By observing that $\prob{X_k=1} \leq \frac{k}{b}$, we can define \(\Cu\) independent random variables \((X'_k)_\Cu\), that are equal to \(1\) with probability \(\frac{k}{b}\), while upper bounding their corresponding \(X_k\) in all realizations. As a result, \(\nbColl\) is upper bounded by their sum, which we note \(S'\),
\begin{equation}\label{eq:nbColl}
\nbColl \leq \sum_{1 \leq k \leq \Cu} X'_k = S'
\end{equation}
We have $\expect{S'}=\sum_{k=0}^{\Cu-1}\frac{k}{b}=\frac{\Cu(\Cu-1)}{2b}$.
By applying the first Chernoff bound proposed by Mitzenmacher and Upfal~\cite{mitzenmacher2017probability} to $S'$, we obtain the following concentration bound for S': 
  \begin{align}
    \begin{array}{l}\displaystyle
    \label{eq:ConcentrationGeorge:onS}
    \probNaked\left[S' \geq (1+d)\frac{\Cu(\Cu-1)}{2b}\right] \leq
    %
    \left(\frac{e^d}{(1+d)^{(1+d)}}\right)^{\textstyle\frac{\Cu(\Cu-1)}{2b}},
    \end{array}
  \end{align}
where $d>0$ is a real positive value. Eq.~(\ref{eq:nbColl}) implies that $\probNaked\left[\nbColl \geq x\right] \leq \probNaked\left[S' \geq x\right]$ for any $x$. Applying this observation to (\ref{eq:ConcentrationGeorge:onS}) and taking the complement yields the theorem.
\end{proof}

\noindent As an example, if we apply Theorems~\ref{theo:probaSameFMH} and~\ref{theo:boundOnKappaEll} to the case of $\Cu=256$, $b=4096$ (some typical values of our experiments), and set $d=0.5$ 
, we obtain that 
$$
\JaccardShort - 0.078 \leq \probNaked[H(u_1)=H(u_2)] \leq \JaccardShort + 0.234
$$
with probability $0.998$ over the space of all hash functions $h$.
The left-hand side of the above equation impacts the quality of our approximation, by ensuring that pairs of similar users tend to be compared: such users show a high $\JaccardShort$ value, and have therefore a high probability to be hashed to the same bucket, and to be compared, since this probability is lower-bounded by a value which is close to their Jaccard similarity.
Conversely, the right-hand side controls the performance of our approach, by lowering the chances of comparing dissimilar users (and thus performing superfluous computations): the probability of such a comparison taking place is upper-bounded by $\JaccardShort$ (which is low for dissimilar users) plus a constant due to collisions that depends on $\Cu$ and $\nbBuckets$ (0.234 in our numerical example).


\section{Experimental Setup}
\label{chap:HyperLSH:setup}

\begin{table*}[tb]
  \caption{Description of the datasets used in our experiments}
  \label{tab:dataDescription}
  \small
  \advance\leftskip-4cm
  \centering
  \begin{tabular}{lrrrrrrc}
    \hline
    \bf Dataset & \bf Users & \bf Items & \bf Scale & \bf Ratings $> 3$ &  $|P_u|$  &  $|P_i|$  & \bf Density \\ \hline
    \MLone (\mlone)~\cite{movielens}  & 6,038 & 3,533 & 1-5 & 575,281 & 95.28 & 162.83 & 2.697\% \\
    \MLten (\mlten)~\cite{movielens}  & 69,816 & 10,472 & 0.5-5 & 5,885,448 & 84.30 & 562.02 & 0.805\% \\
    \MLtwenty (\mltwenty)~\cite{movielens}  & 138,362 & 22,884 & 0.5-5 & 12,195,566 & 88.14 & 532.93 & 0.385\% \\
    \AM (\am)~\cite{mcauley2013amateurs} & 57,430 & 171,356 & 1-5 & 3,263,050 & 56.82 & 19.04 & 0.033\% \\
    \dblp~\cite{DBLP} & 18,889 & 203,030 & 5 & 692,752 & 36.67 & 3.41 & 0.018\% \\
    \GW (\gw)~\cite{Gowalla} & 20,270 & 135,540 & 5 & 1,107,467 & 54.64 & 8.17 & 0.040\% \\
    
    \hline
  \end{tabular}
\end{table*}

\subsection{Datasets}
\label{sec:datasets}
We use six publicly available users/items datasets (Table~\ref{tab:dataDescription}) that cover a varied range of domains (movies and books reviews, co-authorship graphs, and geolocated social networks).

\subsubsection{Three MovieLens datasets} 
MovieLens~\cite{movielens} is a group of anonymous datasets containing movie ratings collected on-line between 1995 and 2015 by GroupLens Research \cite{GroupLens}.
The datasets 
contain movie ratings on a $0.5$-$5$ scale by users who have at least performed more than $20$ ratings. 
To compute the Jaccard similarity, we binarize these datasets by keeping only ratings that reflect a positive opinion (i.e. higher than $3$). 
We use three versions of the dataset, \MLone (\mlone), \MLten (\mlten) and \MLtwenty (\mltwenty), containing between $575,281$ and $12,195,566$ positive ratings. 

\subsubsection{The \AM dataset} 
\AM ~\cite{mcauley2013amateurs} (\am) is a dataset of movie reviews from Amazon collected between 1997 and 2012.
Ratings range from $1$ to $5$.
We restrict our study to users with at least $20$ ratings (positive and negative ratings) to avoid dealing with users with not enough data (this problem, called the {\it cold start problem}, is generally treated separately~\cite{lam2008addressing}).
After binarization, the resulting dataset contains $57,430$ users;  $171,356$ items; and $3,263,050$ ratings.

\subsubsection{DBLP}
\DBLP ~\cite{DBLP} is a dataset of co-authorship from the DBLP computer science bibliography.
In this dataset, both the user set and the item set are subsets of the author set.
If two authors have published at least one paper together, they are linked, which is expressed in our case by both of them
appearing in the profile of the other with a rating of `5'.
As with \am, we only consider users with at least $20$ ratings: the others are removed from the user set but not from the item set. 
The resulting dataset contains $18,889$ users, $203,030$ items, and $692,752$ ratings.

\subsubsection{Gowalla}
\GW ~\cite{Gowalla} (\gw) is a location-based social network.
As \DBLP, both user set and item set are subsets of the set of the users of the social network.
The undirected friendship link from $u$ to $v$ is represented by $u$ rating $v$ with a $5$.
As previously, only the users with at least $20$ ratings are considered. The resulting dataset contains $20,270$ users, $135,540$ items; and $1,107,467$ ratings.

\medskip

We use all datasets for the main performance evaluation (Sec.~\ref{subsec:HyperLSH:results}). We then focus on \MLten and \AM for the parameter sensitivity analysis (Sec.~\ref{sec:HyperLSH:sensitivity}). \MLten and \AM have a similar number of users ($69,816$ for \mlten, $57,430$ for \am) and ratings ($5,885,448$ for \mlten, $3,263,050$ for \am) but they differ by the size of their item set ($10,472$ for \mlten against $171,356$ for \am): \MLten is dense while \AM is sparse. This difference allows us to assess how sparsity 
impacts the performance and quality of \HyperLSH.

\subsection{Baseline algorithms and competitors}
\label{setup:algos}

We compare our approach against four competitors: a naive brute-force solution for reference, two state-of-the-art greedy KNN-graph algorithms (\NNDescent~\cite{NNDescent,bratic2018nn} and \Hyrec~\cite{Hyrec}), and \LSH~\cite{LSH1}. On each dataset, we use the fastest competitor as our main baseline (termed `baseline' or underlined in the following).

While many techniques exist to compute KNN graphs, the performance of most of them heavily depends on the properties of the dataset they are applied to.
For example, product quantization~\cite{jegou2010product} is designed to work well on datasets with a few hundred dimensions and dense values, i.e. in which each data point possesses a non-zero value in most dimensions. 
By contrast, our datasets are high-dimensional (ranging from $3,533$ to $203,030$ items), and sparse, containing only a few ratings per user, two characteristics that render them unsuitable for many existing KNN graph construction techniques. 
\LSH remains the state of the art for KNN graphs computation in high-dimensional spaces and is routinely used as a baseline in recent works~\cite{liu2019lsh,zheng2020pm,cai2019revisit}.
\NNDescent experimentally outperforms \LSH on high-dimensional datasets~\cite{NNDescent}.\commentFT{Removing comment about ``still being the subject of publications'' and instead adding \cite{bratic2018nn} each time we mention \NNDescent.} 
For a fair comparison, all competitors use the \GF compact datastructure~\cite{GFWWW} to compute similarity values (Section~\ref{subsec:HyperLSH:step2}).

\subsubsection{Brute force}
The \BruteForce competitor simply computes the similarities between every pair of profiles, performing a constant number of similarity computations equal to $\frac{n \times (n-1)}{2}$.
This algorithm suffers from a high complexity, but produces an exact KNN graph.

\subsubsection{Greedy approaches}
\label{subsec:greedy}
\emph{\NNDescent and \Hyrec}~\cite{Hyrec,NNDescent,bratic2018nn} are state-of-the-art greedy approaches that construct an approximate KNN graph by exploiting a local search strategy and by limiting the number of similarities computations.
They start from an initial random graph, which is then iteratively refined until convergence.
\NNDescent~\cite{NNDescent,bratic2018nn} and \Hyrec~\cite{Hyrec} mainly differ in their iteration procedure.
\NNDescent compares all pairs $(u_i,u_j)$ among the neighbors of $u$, and updates the neighborhoods of $u_i$ and $u_j$ accordingly.
By contrast, \Hyrec compares all the neighbors' neighbors of $u$ with $u$, rather than comparing $u$'s neighbors between themselves.
Both algorithms stop either when the number of updates during one iteration is below the value $\delta \times k \times |U|$, with a fixed $\delta$, or after a fixed number of iterations.

\subsubsection{\LSH}
Locality-Sensitive-Hashing (\LSH) \cite{LSH1} reduces the number of similarity computations by hashing each user into several buckets. The neighbors of a user $u$ are then selected among the users present in the same buckets as $u$.
To ensure that similar users tend to be hashed into the same buckets, \LSH uses min-wise independent permutations of the item set as its hash functions, similarly to the MinHash algorithm~\cite{broder1997resemblance}.
For fairness, we implement \LSH the same way as \HyperLSH: each hash function creates its own buckets, independently from each other, rather than having one bucket per item. This drastically decreases the number of similarity computations, resulting in a faster overall computation time while only inducing a small loss in quality.

\subsection{Parameter setup}
We compute KNN graphs with neighborhoods of size $k=30$, a standard value~\cite{NNDescent}.
By default, when using \hyperLSH, the number of clusters per hash functions $b$ is set to $4096$ and the number of hash functions $t$ to $8$, except for \dblp and \gw for which the number of hash functions is $15$.
The maximum size of clusters for the recursive splitting procedure is set to $N=2000$, except for \MLtwenty for which it is $N=4000$. 
Both maximum sizes for clusters are below the threshold that determines whether BruteForce or \Hyrec is used ($\rho \times k^2=4500$) in order to privilege \BruteForce which tends to deliver better sub-KNNs than \Hyrec.\commentFT{This sounds like we never use \Hyrec for sub-KNNs. Is it true?}
While locally computing the KNN graphs in clusters, we use 1024-bit \GF vectors (See Section~\ref{subsec:HyperLSH:step2}). 
Beyond these default values, we present a detailed sensitivity study of the effect of $b$, $t$, and $N$ on the performance of \hyperLSH in Section~\ref{subsec:HyperLSH:nbhashnbclusters}.

The parameter $\delta$ of \Hyrec and \NNDescent is set to $0.001$, and their maximum number of iterations to $30$.
The number of hash functions for \LSH is 10.

\subsection{Evaluation metrics}
We measure the performance of \hyperLSH and its competitors along two main metrics: {\it (i)} their computation {\it time}, and {\it (ii)} the {\it quality} ratio of the resulting KNN (Sec.~\ref{sec:problem-definition}).
As an example of application, we also use the KNN graphs produced by \hyperLSH to compute recommendations and compare the resulting \emph{recall} to recommendations obtained with an exact KNN graph computed with the brute force approach.
Throughout our experiments, we use a 5-fold cross-validation procedure and average our results over the 5 resulting runs.

\subsection{Implementation details and hardware}
We have implemented \BruteForce, \Hyrec, \NNDescent, \LSH, and \hyperLSH in Java 1.8. 
Our \FMH functions are computed using Jenkins’ hash function~\cite{jenkins1997hash}.
Our experiments run on a 64-bit Linux server with two Intel Xeon E5420@2.50GHz,
totaling 8 hardware threads, 32GB of memory, and a HHD of 750GB.
We use all 8 threads.
Our code is available online\footnote{https://gitlab.inria.fr/oruas/SamplingKNN}.

\section{Evaluation}
\label{sec:HyperLSH:eval}

We first discuss the raw performance of \hyperLSH, compared to the brute force approach, \LSH, \NNDescent, and \Hyrec (Sec. \ref{subsec:HyperLSH:results}).
We then evaluate the performance of the obtained KNN graphs when used to provide recommendations (Sec. \ref{subsec:HyperLSH:reco}).
We evaluate the impact of \FMH on \hyperLSH (Sec.~\ref{sec:HyperLSH:impact}) and finally assess the effect of the \GF data structure (Sec.~\ref{sec:impact-gf}).

\renewcommand{\myHrule}{\cline{3-7}}

\begin{table}[tb]
  \caption{Computation time and KNN quality. 
  Speed-ups are computed against the best baseline (underlined). \HyperLSH clearly outperforms all competitors, yielding speed-ups of up to $\times4.42$ 
  against the state of the art.
  } 
  \label{tab:time_KNNquality_HyperLSH}
  \renewcommand{\mycolsep}{\hspace{0.6em}}
  \centering
  \small
    \begin{tabular}{@{\hspace{-.5em}}r@{}cr@{\mycolsep}r@{\mycolsep}c@{\mycolsep}c@{\mycolsep}c}
      & & Algo & \bf Time (s) & \bf Gain ($\%$) & \bf Quality & \bf $\Delta$  \\
      \myHrule
      \vertbrace{23}{\mathit{datasets}}{28.5em}{0em}


&\dataset{\mlone}	

	    &	\Hyrec	&	$4.43$	&	-	&	$0.92$	&	- \\
&	&	\NNDescent	&	$10.98$	&	-	&	$0.93$	&	- \\
&	&	\underline{\LSH}	&	\underline{$2.96$}	&	-	&	\underline{$0.92$}	&	- \\
      &	&	\high \HLSH	(ours) &	\high $\bf 2.64$	&	\high $10.81$	&	\high $0.91$	&	\high $-0.01$ \\

      \datasetsep
&\dataset{\mlten}

	&	\underline{\Hyrec}	&	\underline{$109.98$}	&	-	&	\underline{$0.90$}	&	- \\
&	&	\NNDescent	&	$147.03$	&	-	&	$0.93$	&	- \\
&	&	\LSH	&	$255.33$	&	-	&	$0.94$	&	- \\
      &	&	\high \HLSH	&	\high $\bf 27.79$	&	\high $74.73$	&	\high $0.89$	&	\high $-0.01$ \\

      \datasetsep
&\dataset{\mltwenty}

	&	\underline{\Hyrec}	&	\underline{$289.23$}	&	-	&	\underline{$0.88$}	&	- \\
&	&	\NNDescent	&	$383.21$	&	-	&	$0.92$	&	- \\
&	&	\LSH	&	$1060.76$	&	-	&	$0.93$	&	- \\
      &	&	\high \HLSH	&	\high $\bf 106.25$	&	\high $63.26$ 	&	\high $0.89$	&	\high $+0.01$ \\

      \datasetsep
&\dataset{\am}

	&	\underline{\Hyrec}	&	\underline{$62.41$}	&	-	&	\underline{$0.93$}	&	- \\
&	&	\NNDescent	&	$91.24$	&	-	&	$0.95$	&	- \\
&	&	\LSH	&	$140.53$	&	-	&	$0.96$	&	- \\
      &	&	\high \HLSH	&	\high $\bf 14.11$	&	\high $77.39$	&	\high $0.95$	&	\high $+0.02$ \\

      \datasetsep
&\dataset{\dblp}

	    &	\Hyrec	&	$26.84$	&	-	&	$0.81$	&	- \\
&	&	\underline{\NNDescent}	&	\underline{$24.43$}	&	-	&	\underline{$0.82$}	&	- \\
&	&	\LSH	&	$37.80$	&	-	&	$0.86$	&	- \\
      &	&	\high \HLSH	&	\high $\bf 6.54$	&	\high $73.27$	&	\high $0.84$	&	\high $+0.02$ \\

      \datasetsep
&\dataset{\gw}

	&	\underline{\Hyrec}	&	\underline{$21.88$}	&	-	&	\underline{$0.78$}	&	- \\
&	&	\NNDescent	&	$26.05$	&	-	&	$0.79$	&	- \\
&	&	\LSH	&	$26.91$	&	-	&	$0.82$	&	- \\
&	&	\high \HLSH	&	\high $\bf 8.38$	&	\high $61.70$	&	\high $0.82$	&	\high $+0.04$ \\

      \myHrule
      
    \end{tabular}
\end{table}

\subsection{Computation time and KNN quality}
\label{subsec:HyperLSH:results}

The performances of \hyperLSH are summarized in Table~\ref{tab:time_KNNquality_HyperLSH} over the six datasets.
A part of the results is shown graphically in Figures~\ref{fig:HyperLSH:Time} and~\ref{fig:HyperLSH:KNNquality}.
In addition to those of \hyperLSH (noted \HLSH), the performances of 
\Hyrec, \NNDescent, and \LSH are also displayed for each dataset.
The best computation time is shown in bold, the time of the best baseline is underlined, and the speed-up is computed w.r.t. this best baseline.

\HyperLSH provides the best computation time on all the datasets.
\HyperLSH clearly outperforms all the approaches, providing speed-ups from $\times1.12$ ($-10.81\%$ on \mlone) to $\times4.42$ ($-77.39\%$ on \am) compared to the best baselines.
The KNN quality provided by \hyperLSH is similar to the one provided by the fastest approaches: it goes from a loss of $0.01$ (on \mlone) to a gain of $0.04$ (on \gw).


\begin{figure*}[tb]
  \captionsetup[subfloat]{farskip=0pt}  \center
  \subfloat[\MLtwenty]{
    \includegraphics[width=0.23\linewidth]{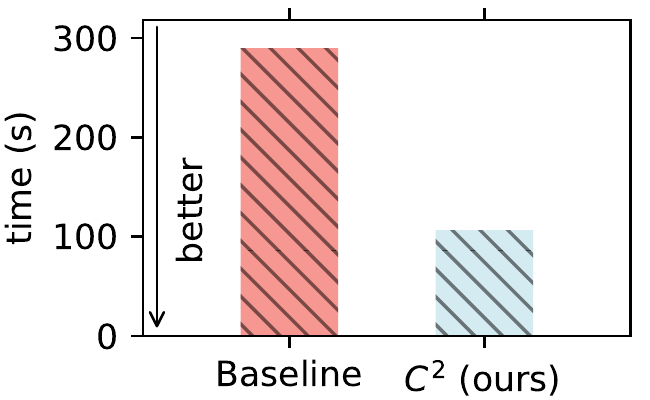}}\hfil
  \subfloat[\AM]{
    \includegraphics[width=0.23\linewidth]{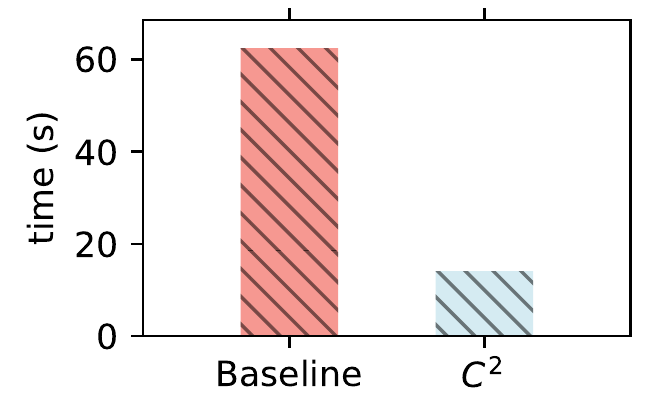}}\hfil
  \subfloat[\DBLP]{
    \includegraphics[width=0.23\linewidth]{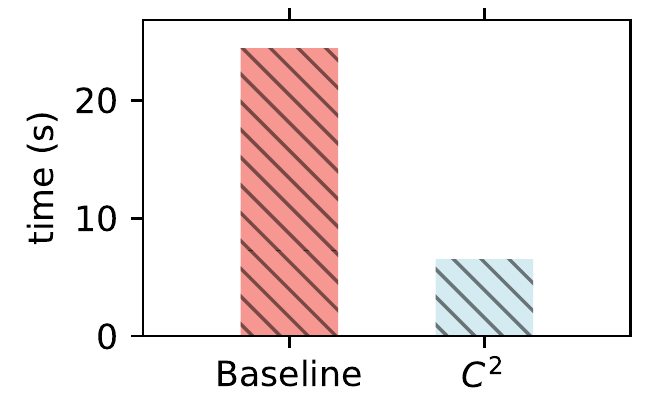}}\hfil
  \subfloat[\GW]{
    \includegraphics[width=0.23\linewidth]{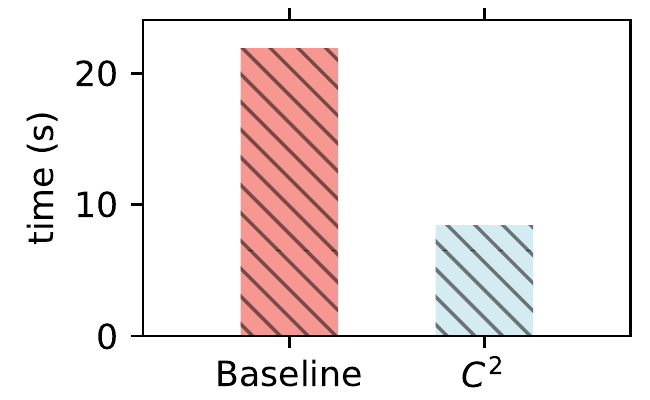}}
  \caption{Execution time of \hyperLSH (\HLSH) and the best competing approach (Baseline) for each dataset (lower is better). \HyperLSH clearly outperforms the best competitor across all datasets.}
  \label{fig:HyperLSH:Time}
\end{figure*}

  
\begin{figure*}
  \captionsetup[subfloat]{farskip=0pt}  \center
  \subfloat[\MLtwenty]{    \includegraphics[width=0.23\linewidth]{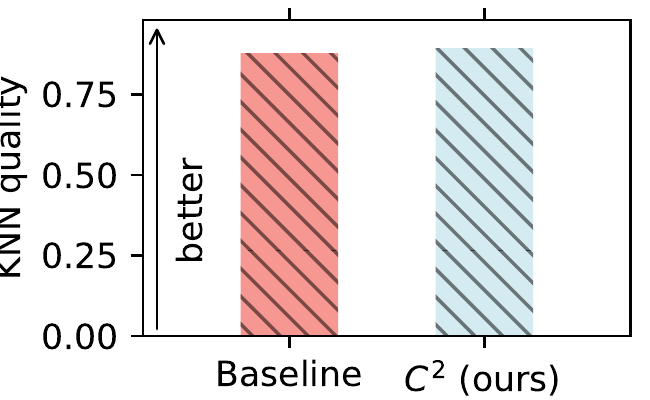}}\hfil
  \subfloat[\AM]{
    \includegraphics[width=0.23\linewidth]{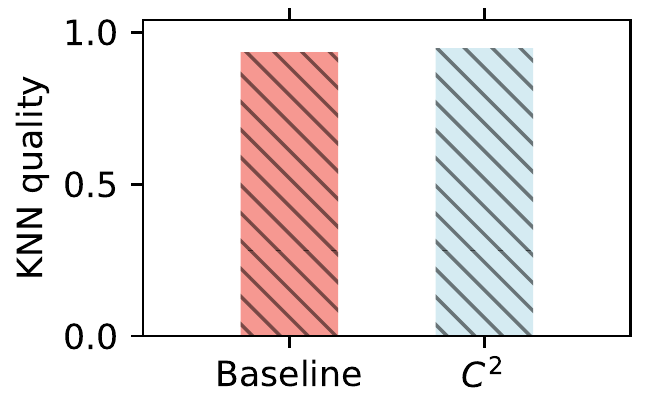}}\hfil
  \subfloat[\DBLP]{
    \includegraphics[width=0.23\linewidth]{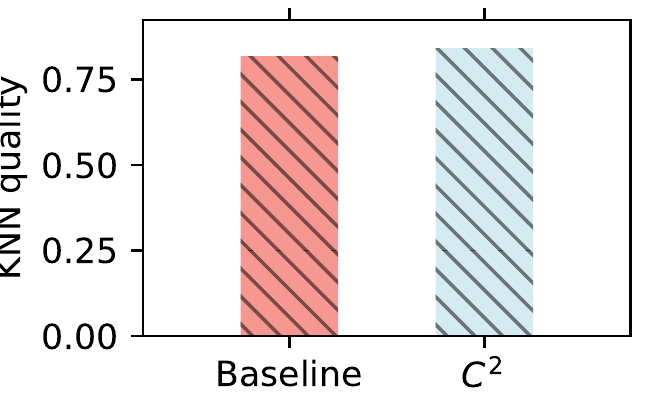}}\hfil
  \subfloat[\GW]{
    \includegraphics[width=0.23\linewidth]{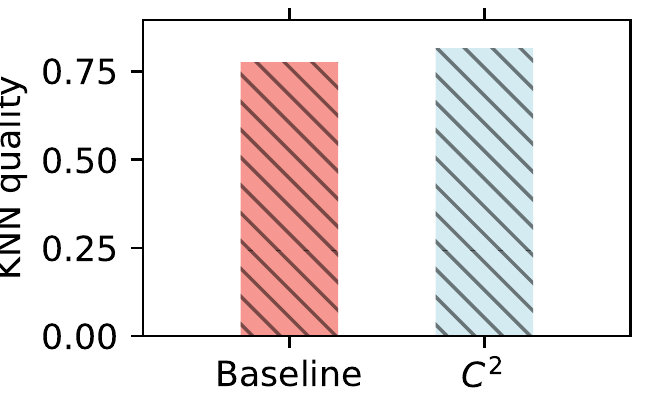}}
  \caption{KNN quality using \hyperLSH and the fastest approach for each dataset (higher is better). Baseline refers to the fastest native approach. On the four above datasets, \HyperLSH (\HLSH) provides a slightly improved KNN quality.}
  \label{fig:HyperLSH:KNNquality}
\end{figure*}






\newcommand{\myHruleReco}{\cline{1-4}}

\begin{table}[tb]
  \caption{Recommendation quality using the brute force approach and Cluster-and-Conquer ($C^2$) while recommending 30 items to every user. The loss in recall ($-2.05\%$ on average) incurred by Cluster-and-Conquer is small.
  }
  \label{tab:recall}
  \renewcommand{\mycolsep}{\hspace{0.6em}}
  \centering
  \small
    \begin{tabular}{@{}l@{\mycolsep}c@{\mycolsep}c@{\mycolsep}c@{}}
      Dataset & Brute force & \bf \HyperLSH & \bf $\Delta$ \\ 

      \myHruleReco

      \MLone & $0.218$ & $0.214$ & $-0.004$ \\ 
      \MLten & $0.273$ & $0.271$ & $-0.003$ \\ 
      \MLtwenty & $0.256$ & $0.253$ & $-0.003$ \\ 
      \AM & $0.595$ & $0.570$ & $-0.025$ \\ 
      \DBLP & $0.360$ & $0.355$ & $-0.005$ \\ 
      \GW & $0.268$ & $0.261$ & $-0.007$ \\ 

      \myHruleReco
      
    \end{tabular}
\end{table}

\subsection{\HyperLSH in action}
\label{subsec:HyperLSH:reco}

We study the practical impact of the approximate KNN quality of \hyperLSH on the iconic item recommendation problem.
We use a simple collaborative filtering procedure, and compare the recommendations obtained with exact KNN graphs to recommendations obtained with \hyperLSH.
Table~\ref{tab:recall} shows the recall obtained when recommending $30$ items to each user in \MLone, \MLten, \MLtwenty, \AM, \DBLP, and \GW using 5-fold cross-validation.
The results of the exact KNN graphs are labeled BruteForce while the ones obtained with \hyperLSH are labeled \HLSH.
The loss in recall is small: we obtain an average loss of $2.05\%$, with loss values ranging from $1.10\%$ on \MLten to $4.20\%$ on \AM.
\hyperLSH provides KNN graphs that are good enough to perform item recommendation with almost no loss, demonstrating its practical potential to accelerate end-user applications with close to no impact.

\renewcommand{\myHrule}{\cline{3-8}}

\begin{table}[tb]
  \caption{Impact of the use of \FMH functions (FRH for short) on 
    \HyperLSH. The gain and speed-up values are computed against the best baseline of Table~\ref{tab:time_KNNquality_HyperLSH}. \FMH functions provide important speeds-up at the cost of a small loss in KNN quality.}
  \label{tab:time_KNNquality_impact_HyperLSH}
  \renewcommand*{\arraystretch}{1.2}
  \centering
  \small
  \newcommand{\datasetTwo}[1]{\multirow{2}{*}{\begin{sideways}#1\end{sideways}}}
  \resizebox{0.48\textwidth}{!}{
  \begin{tabular}{@{\hspace{-0.5em}}r@{\hspace{-0.2em}}cr@{\hspace{0.7em}}c@{\mycolsep}r@{\mycolsep\mycolsep}c@{\mycolsep}c@{}c}
    & & Mechanisms & \bf {\bf time (s)} & {\bf Gain} & \hspace{-0.5em}{\bf Speed-up}  & \bf Quality & $\Delta$  \\
      \myHrule
      \vertbrace{5}{\textit{datasets}}{5.5em}{0.25em}
      
      & \datasetTwo{\mlten} & 
      \MinHash	& $126.74$	 &	$-15.24\%$&$\times 0.87$~	& $0.93$	& $+0.03$ \\
      & & \high FRH (ours)	& \high$\bf 27.79$	 &	\high $74.73\%$&\high $\times 3.96$~	& \high$0.89$	& \high$-0.01$ \\
      
      \datasetsep
      
      & \datasetTwo{\am} & 
      \MinHash 	& $97.31$	 & $-55.90\%$&$\times 0.64$~		& $0.95$	& $+0.02$ \\
      & & \high FRH (ours)   	& \high$\bf 14.11$	 & \high$77.39\%$&\high $ \times 4.42$~		& \high$0.95$	& \high$+0.02$ \\
      \myHrule

      
      

      
      

      
      
      
    \end{tabular}
}
\end{table}

\subsection{Impact of \FMH}
\label{sec:HyperLSH:impact}


\HyperLSH combines several key mechanisms: \FMH functions and their recursive splitting mechanism, the independent KNN computations, and the use of \GF. 
To investigate the impact of \FMH on the overall approach,
we replace \FMH with \MinHash in \hyperLSH.
\MinHash functions are classically employed in the \LSH algorithm and create one cluster per item by design.
In the \hyperLSH/\MinHash variant, we use $t$ \MinHash functions to create $t \times m$ clusters, without splitting.
The local KNN graphs are computed independently using \GF on the $t \times m$ clusters, then merged as in \hyperLSH.

Table~\ref{tab:time_KNNquality_impact_HyperLSH} summarizes the results of all corresponding experiments.
  The table shows \FMH has a decisive impact on the performance of our approach: it decreases the computation time by $78.07\%$ (\MLten) and $85.50\%$ (\AM) while providing a competitive quality.

These results also demonstrate that \HYPERLSH works well on both sparse and dense datasets, \MLten and \AM being representative of both categories (see Sec.~\ref{sec:datasets}).

\renewcommand{\myHrule}{\cline{3-8}}

\begin{table}[tb]
  \caption{Impact of the use of \GF functions (Golfi for short) on 
    \HyperLSH. Gain and speed-up against the best baseline of Table~\ref{tab:time_KNNquality_HyperLSH} (with \GF). \GF provides an important speed-up to \hyperLSH, which remains nevertheless competitive even on raw data.
  }%
  \label{tab:time_KNNquality_impact_gf_HyperLSH}
  \renewcommand*{\arraystretch}{1.2}
  \centering
  \small
  \newcommand{\datasetTwo}[1]{\multirow{2}{*}{\begin{sideways}#1\end{sideways}}}
  \resizebox{0.48\textwidth}{!}{
    \begin{tabular}{@{\hspace{-0.5em}}r@{\hspace{-0.2em}}cr@{\hspace{0.7em}}c@{\mycolsep}r@{\mycolsep\mycolsep}c@{\mycolsep}c@{}c}
    
      & & Mechanisms & \bf {\bf time (s)} & {\bf Gain} & \hspace{-0.5em}{\bf Speed-up}  & \bf Quality & $\Delta$  \\
      \myHrule
      \vertbrace{5}{\textit{datasets}}{5.5em}{0.25em}
      
      & \datasetTwo{\mlten} & 
      Raw data	& $111.29$	 &	$-1.19\%$&$\times 0.99$~	& $0.94$	& $+0.04$ \\
      & & \high Golfi (ours)	& \high$\bf 27.79$	 &	\high $74.73\%$&\high $\times 3.96$~	& \high$0.89$	& \high$-0.01$ \\
      
      \datasetsep
      
      & \datasetTwo{\am} & 
      Raw data 	& $35.05$	 & $43.84\%$&$\times 1.78$~		& $0.95$	& $+0.02$ \\
      & & \high Golfi  (ours)   	& \high$\bf 14.11$	 & \high$77.39\%$&\high $ \times 4.42$~		& \high$0.95$	& \high$+0.02$ \\
      \myHrule

      
      

    \end{tabular}
}
\end{table}

\subsection{Impact of \GF}
\label{sec:impact-gf}

We now study the impact of \GF on \hyperLSH by computing the KNN graphs both with \GF and with the raw data.
The results of these experiments are summarized in Table~\ref{tab:time_KNNquality_impact_gf_HyperLSH}.
The gains and speed-ups are computed against the baselines of Table~\ref{tab:time_KNNquality_HyperLSH} (which use \GF).
Despite this disadvantage, \hyperLSH without \GF is still competitive against the baselines, producing a $43.84\%$ gain in computation time on \AM, and being only slightly slower than the best competitor on \MLten.

\commentFT{Dropped last sentence, as repeat of what said earlier.}


\section{Sensitivity Analysis of Key Parameters}
\label{sec:HyperLSH:sensitivity}
\commentFT{I think we can strongly reduce this section, which is nice to have, but not core. I'd focus only on \AM in Sec.~\ref{subsec:HyperLSH:nbhashnbclusters}, with only one figure (Figure~\ref{fig:HyperLSH:nb_hash_AM}?), and only keeping Figure~\ref{subsec:HyperLSH:nbhashtry} in Sec.~\ref{subsec:HyperLSH:nbhashtry}.}
The performances of \hyperLSH depend on many parameters: the number of clusters per hash function, the number of hash functions, and the maximum size of the clusters. 
In this section, we study the influence of each of these parameters. 
Unless stated otherwise, the parameters are the same as in the previous sections: more specifically the number of clusters per hash function $b$ is $4096$, the number of hash functions $t$ is $8$, and the maximum size of the clusters $N$ is $2000$. 

\newcommand{\graphScale}{0.70}

\begin{figure}[tb]
  \centering
  \subfloat[\MLten]{
    \includegraphics[scale=\graphScale]{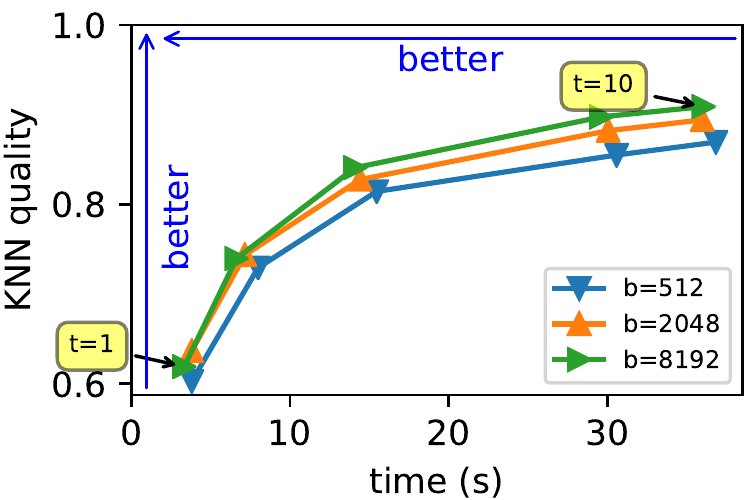}
    \label{fig:HyperLSH:nb_hash_ml10}
  }
  \\
  \subfloat[\AM]{
    \includegraphics[scale=\graphScale]{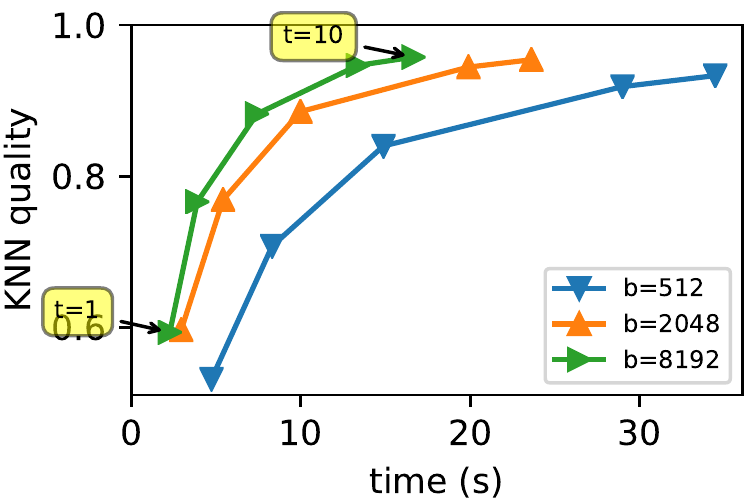}
    \label{fig:HyperLSH:nb_hash_am}
  }
  \caption{Effect of the number of hash functions $t$ on \hyperLSH, for \MLten and \AM.
    Each curve shows the impact of $t$ as it takes the values $\{1,2,4,8,10\}$ for a given number of clusters $b\in \{512,2048,8192\}$.
    A higher $t$ trades off time for quality, but values beyond $8$ offer diminishing returns.}
  \label{fig:HyperLSH:nb_hash}
\end{figure}

\subsection{Number of clusters and hash functions}
\label{subsec:HyperLSH:nbhashnbclusters}

Figure~\ref{fig:HyperLSH:nb_hash} charts how \hyperLSH performs for three different values of $b$ ($512$, $2048$ and $8192$), and five values of $t$ ($1$, $2$, $4$, $8$, and $10$) on two time$\times$quality plots.
Each curve corresponds to a given value of $b$, with the points of the curve obtained by varying $t$.
The figure shows that $t$, the number of hash functions, captures a trade-off between computation time and KNN quality: more hash functions improve quality, but at the cost of higher computation times, and with diminishing returns beyond $t=8$.\commentFT{I've removed the logarithmic comment, as a log formally grows indefinitely while quality here is necessarily capped. I've also tighten up the text which was a bit long.}

In contrast to $t$, increasing $b$, the number of clusters per hash function,\commentFT{adding this precision, as splitting increases the final number of buckets, and there are $b\times t$ cluster before splitting} improves both the computation time and the KNN quality (albeit at the cost of a higher memory consumption). The impact of $b$ is more pronounced on \AM than on \MLten. As we will see in the next section, this is probably because 
recursive splitting strongly impacts \MLten and limits the influence of $b$ by adding a large number of additional clusters. By contrast, recursive splitting has no impact on \AM with $N=2000$ (the value used in Fig.~\ref{fig:HyperLSH:nb_hash}), with the effect that the final number of clusters per hash function is solely determined by $b$.

\begin{figure*}[tb]
  \renewcommand{\graphScale}{0.75}
  \centering
  \parbox{0.32\textwidth}{%
    \vspace{1em}
    
\includegraphics[scale=\graphScale]{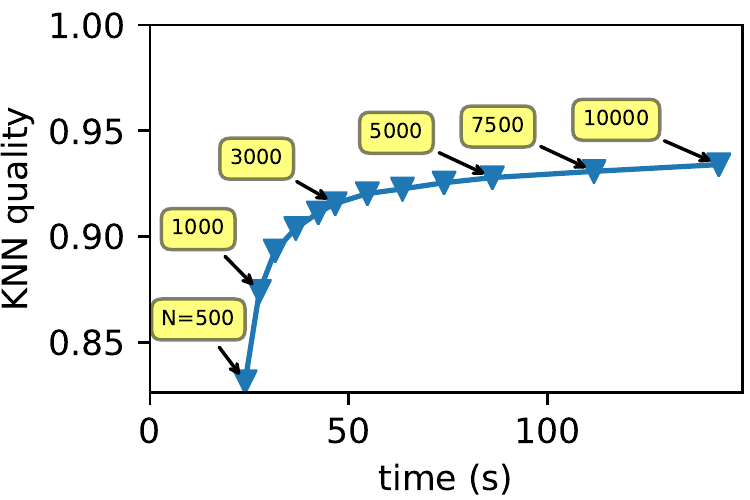}
  \vspace{0.1em}
  \caption{Effect of the maximum cluster size $N$ on the KNN quality and computing time of \hyperLSH on \MLten. Reducing $N$ improves time at the expense of quality.\commentFT{To check, the initial caption said the contrary: increasing $N$ decreases the computation time.}}
  \label{fig:HyperLSH:nb_hash_try:mlten}%
}
  \hfill{}
\renewcommand{\graphScale}{0.75}
\parbox{0.65\textwidth}{%
  \centering%
  \subfloat[\MLten]{%
    \includegraphics[scale=\graphScale]{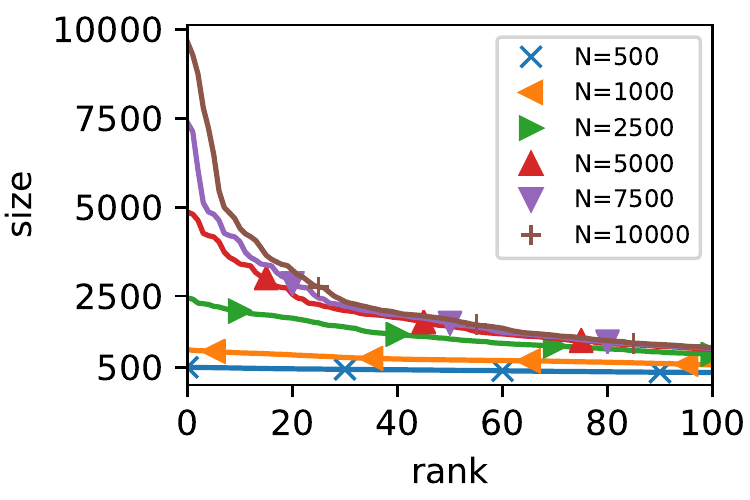}
    \label{fig:HyperLSH:rank_size:mlten}
  }
  \subfloat[\AM]{%
    \includegraphics[scale=\graphScale]{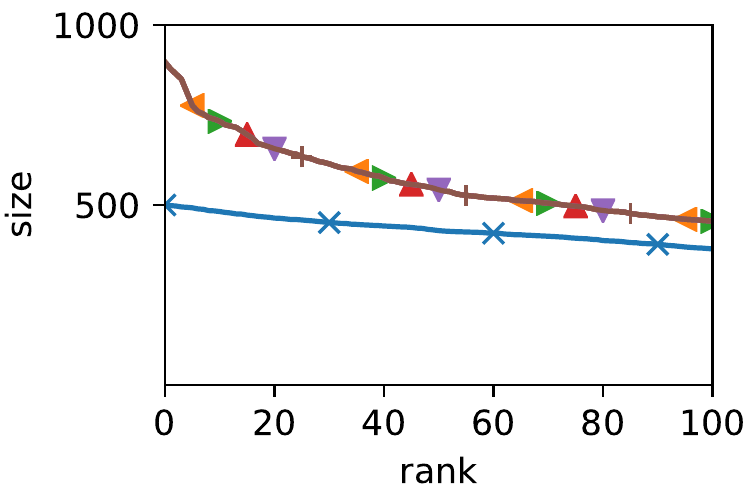}
    \label{fig:HyperLSH:rank_size:am}
  }
  \caption{Effect of the maximum cluster size $N$ on the $100$ biggest clusters of \hyperLSH for \MLten and \AM. On \MLten, the biggest clusters have a size close to $N$ while $N\geq 1000$ has no impact on \am for which the biggest clusters have a size lower than $1000$.}
  \label{fig:HyperLSH:rank_size}%
}
\end{figure*}

\subsection{The recursive splitting strategy}
\label{subsec:HyperLSH:nbhashtry}


Figure~\ref{fig:HyperLSH:nb_hash_try:mlten} shows the
impact of the maximum cluster size $\nbHashTry$ on the computation time and KNN quality of \HyperLSH on \MLten when
$\nbHashTry$ varies from $500$ to $10,000$.
On \MLten increasing $\nbHashTry$ leads to a higher KNN quality, but at the cost of a longer computation time, with a kneepoint around $N=3000$.
By contrast, \AM shows almost no variation for the same values of $N$, and the corresponding plot is omitted for space reasons. 


The difference in the impact of $\nbHashTry$ between \MLten and \AM can be traced back to the popularity distribution of items in each dataset, which in turn impacts how the recursive splitting procedure affects each of them.
This is illustrated in Figure~\ref{fig:HyperLSH:rank_size}, which shows the size of the $100$ biggest clusters in \MLten and \AM for different values of $\nbHashTry$ ranging from $500$ to $10000$. In \MLten (Fig.~\ref{fig:HyperLSH:rank_size:mlten}), raw clusters (without splitting) are highly unbalanced (which is visible for high values of $N$ in the figure). As $N$ decreases, the size of the resulting clusters becomes more uniform, reducing computation times, but scattering similar users in distinct clusters, and thus hurting quality.

By contrast, the largest raw cluster in \AM (Fig.~\ref{fig:HyperLSH:rank_size:am}) contains fewer than $1000$ users. As a result, except for the smallest value of $N=500$, \HyperLSH on \AM does not use recursive splitting and is immune to the impact of $N$.

\section{Related Work}
\label{chap:HyperLSH:related}

KNN graphs are a key mechanism in many problems ranging from classification~\cite{gorai2011employing,DBLP:journals/corr/NodarakisSTTP14} to item recommendation~\cite{Hyrec,Levandoski:2012:LLR:2310257.2310356,linden2003amazon}.
Also, KNN graphs are the first step of more advanced machine-learning techniques~\cite{chen2009fast}.

In small dimension, i.e. when the item set is small, the computation of an exact KNN graph is achieved with specialized data structures~\cite{bentley1975multidimensional,beygelzimer2006cover,liu2004investigation}. 
In high dimension, these techniques are more expensive than the brute force approach.
Computing an exact KNN graph efficiently  in high dimension remains an open problem.

To speed-up the computation of the KNN graph in high dimension, recent approaches decrease the number of similarities computed.
In LSH~\cite{LSH1,LSH2}, each user is placed into several buckets, depending on their profiles.
Two users are in the same bucket with a probability proportional to their similarity.
The KNN of each user is then computed by only considering the users who are in the same buckets as her.
Unfortunately, LSH~\cite{LSH1,LSH2} tends to scale poorly when applied to datasets with large item sets.

Greedy approaches~\cite{Hyrec,NNDescent,bratic2018nn} reduce the number of computing similarities by performing a local search: they assume that neighbors of neighbors are also likely to be neighbors.
They start from a random graph and then iteratively refine each neighborhood by computing similarities among neighbors of neighbors.
These approaches are the most efficient so far on the datasets we are working on.
Still, they spend most of the total computation time computing similarities~\cite{KIFF}.

Compact representations of the data can be used to speed-up similarity computations. 
Several estimators~\cite{dahlgaard2017fast,christiani2018scalable}, including the popular MinHash~\cite{broder1997resemblance,li_theory_2011}, rely on compact datastructures to provide a quick yet accurate estimate of the Jaccard similarity.
Bloom filters~\cite{BloomFilter} can be used as a compact representation of the users' profiles while computing a KNN graph~\cite{alaggan2012blip,gorai2011employing}. Another very simple approach consists in  limiting the size of each user's profile by sampling~\cite{LPeuropar}.
\GF~\cite{GFICDE} is a fast-to-compute compact data structure designed to speed up the Jaccard similarity, which has shown good results across a range of datasets and algorithms~\cite{GFWWW}.


Among the classical clustering techniques, k-means~\cite{macqueen1967some} relies on similarities to provide an efficient clustering:~\cite{xue2005scalable} uses a k-means to cluster the users before computing locally the KNN graph.
Unfortunately, it requires to compute many similarities while our main purpose is to limit as much as possible the number of similarities computed.
On the other hand, LSH~\cite{LSH1,LSH2} clusters users without computing any similarity.
The work presented in~\cite{DBLP:conf/pkdd/ZhangHGL13} uses LSH to cluster users before computing local KNN graphs, as we do. 
The complexity of their clustering approach is $O(n \times m)$, where $n$ is the number of items and $m$ the number of features. 
The performances are good in image datasets, where the dimension $m$, i.e. the number of pixels, is a few thousand but it becomes prohibitive on datasets with much higher dimensions, such as the examples with have considered here.
Similarly, the use of recursive Lanczos bisections~\cite{chen2009fast,lanczos1950iteration} leverages a similar strategy but the divide step has a complexity that makes it more expensive than the brute force approach when used with high dimensional datasets such as the ones we consider.

\section{Conclusions}
\label{chap:HyperLSH:conclusions}

In this paper, we have proposed \hyperLSH, a novel algorithm to compute KNN graphs.
\HyperLSH accelerates the construction of KNN graphs by approximating their graph locality in a fast and robust manner. 
\HyperLSH relies on a divide-and-conquer approach that clusters users, computes locally the KNN graphs in each cluster, and then merges them.
The novelties of the approach are the \FMH functions used to pre-cluster users, their recursive splitting mechanism to produced more balanced clusters, and the fact that the clusters are computed independently, without any synchronization.
Although we have only considered standalone experiments, the general structure of \HyperLSH further makes is particularly amenable to large-scale distributed deployments, in particular within a map-reduce infrastructure.

We extensively evaluated \hyperLSH on real datasets and conducted a sensitivity analysis.
Our results show that \HyperLSH significantly outperforms the best existing approaches, including \LSH, on all datasets, yielding speed-ups ranging from $\times 1.12$ (against \Hyrec on \mlone) up to $\times 4.42$ (against \Hyrec on \am), while incurring only negligible losses in KNN quality.
Finally, we showed that the obtained graphs can replace the exact ones when performing recommendations with almost no discernible impact on recall. 

\bibliographystyle{IEEEtran} 
\bibliography{biblio}





\end{document}